\newtheorem{theorem}{Theorem}
\newtheorem{lemma}[theorem]{Lemma}
\begin{document}
\title{Ground State Preparation via Dynamical Cooling}

\author{Danial Motlagh}
\email{danial.motlagh@xanadu.ai}
\affiliation{Xanadu, Toronto, ON, M5G2C8, Canada}

\author{Modjtaba Shokrian Zini}
\affiliation{Xanadu, Toronto, ON, M5G2C8, Canada}

\author{Juan Miguel Arrazola}
\affiliation{Xanadu, Toronto, ON, M5G2C8, Canada}

\author{Nathan Wiebe}
\affiliation{
University of Toronto, Department of Computer Science, Toronto ON, Canada
}%
\affiliation{
Pacific Northwest National Laboratory, Richland WA, USA
}%
\affiliation{
Canadian Institute for Advanced Research, Toronto ON, Canada
}%


\begin{abstract}
Quantum algorithms for probing ground-state properties of quantum systems require good initial states. Projection-based methods such as eigenvalue filtering rely on inputs that have a significant overlap with the low-energy subspace, which can be challenging for large, strongly-correlated systems. This issue has motivated the study of physically-inspired dynamical approaches such as thermodynamic cooling. In this work, we introduce a ground-state preparation algorithm based on the simulation of quantum dynamics. Our main insight is to transform the Hamiltonian by a shifted sign function via quantum signal processing, effectively mapping eigenvalues into positive and negative subspaces separated by a large gap. This automatically ensures that all states within each subspace conserve energy with respect to the transformed Hamiltonian. Subsequent time-evolution with a perturbed Hamiltonian induces transitions to lower-energy states while preventing unwanted jumps to higher energy states. The approach does not rely on a priori knowledge of energy gaps and requires no additional qubits to model a bath. Furthermore, it makes $\Tilde{\mathcal{O}}(d^{\,3/2}/\epsilon)$ queries to the time-evolution operator of the system and $\Tilde{\mathcal{O}}(d^{\,3/2})$ queries to a block-encoding of the perturbation, for $d$ cooling steps and an $\epsilon$-accurate energy resolution. Our results provide a framework for combining quantum signal processing and Hamiltonian simulation to design heuristic quantum algorithms for ground-state preparation.   
\end{abstract}
\maketitle

\section{Introduction}

Computing ground-state properties of quantum systems is an important challenge in quantum chemistry and condensed-matter physics. This has motivated the use of quantum computers as a potential avenue to obtain improved solutions~\cite{aspuru2005simulated, ge2019faster,lanyon2010towards, lin2020near, o2019quantum, veis2010quantum, lin2022heisenberg}. However, preparing ground states and estimating properties such as the ground-state energy presents difficulties even for quantum computing~\cite{lee2023evaluating}. Quantum algorithms like quantum phase estimation (QPE) \cite{kitaev1995quantum, aspuru2005simulated} can compute the energy of an input state in polynomial time, but it remains challenging to prepare approximate ground states of many-body Hamiltonians to be used as an initial state for QPE~\cite{fomichev2023initial}. This difficulty can be traced back to the fact that in its full generality, ground-state energy estimation is \QMA-complete \cite{aharonov2009power, kempe2006complexity, kitaev2002classical, cubitt2016complexity, gottesman2009quantum}.\\

Unlike the ground-state problem, simulating time evolution under a local Hamiltonian is tractable on quantum computers~\cite{feynman2018simulating, lloyd1996universal, nielsen2002universal, childs2012hamiltonian, wiebe2011simulating, haah2021quantum, schleich2024chemically}, and computationally is in the class \BQP ~of problems that can be solved in polynomial time on a quantum computer with high probability. Furthermore, no general sub-exponential-time classical algorithm is known for simulating quantum dynamics. 
This raises the question: is it possible to simulate the dynamical process by which physical systems reach their ground state to perform more efficient state preparation? While QMA-hardness indicates that not every ground state can be prepared this way, many examples of practical importance, such as real molecules and materials, are known to thermalize efficiently. By mimicking the natural processes that bring physical systems to their ground state, we may be able to prepare low-energy states more efficiently than traditional approaches to quantum ground state preparation. \\

Recent work~\cite{fomichev2023initial} has argued that the leading methods for initial state preparation in quantum chemistry are those based on encoding classical wavefunctions on a quantum computer. In practice, when dealing with large and highly-correlated systems, classical methods can struggle to prepare initial states with significant overlap with the ground state, or even with a low-energy subspace. This presents a challenge as many of the traditional state preparation methods in quantum computing have hitherto been ``static projection'' methods, whose success probability is determined by the overlap between the input state and the desired eigenstate~\cite{lin2020near, dong2022ground}. Therefore, there is motivation to develop fully quantum algorithms for state preparation that employ dynamic and adaptive approaches that go beyond the limitations of classical initial states.\\


Among these, dissipative state preparation methods have gained traction~\cite{verstraete2009quantum, kraus2008preparation, su2020quantum, polla2021quantum, shtanko2021algorithms, cubitt2023dissipative, ding2023single, pocrnic2023quantum, kastoryano2023quantum}, taking inspiration from thermodynamic processes such as cooling via a bath. Many of these techniques use ancillary qubits to act as external heat baths that interact with the system to siphon out energy and guide it toward its ground state. By initially placing the bath in its ground state and using conservation of energy in the weak interaction limit, it is possible to ensure that the only allowed transitions are those that decrease or maintain the energy of the system. However, this also means that the only possible transitions are those that decrease the system's energy by an available energy level in the bath. This requires previous knowledge of the gaps between the eigenstates of the Hamiltonian.\\

Our work contributes to this emerging paradigm with a quantum algorithm that offers several improvements compared to previous work. First, it does not require a priori knowledge of energy differences between eigenstates of the Hamiltonian. Second, no additional qubits are needed to represent a bath -- all transformations occur exclusively in the Hilbert space of the system. Finally, our algorithm can succeed in preparing approximations of the ground state even if the overlap of the initial state is zero (or close to zero), making our approach suitable to instances where classical methods have difficulty in ensuring large overlaps. \\

The quantum algorithm uses a combination of quantum signal processing (QSP) and dynamical simulation under a perturbed Hamiltonian to engineer transformations that systematically and iteratively reduce the energy of the system (with high probability). After coarsely estimating the energy $\mathcal{E}$ of the input state, the Hamiltonian $H$ is transformed by a shifted sign function using QSP, with the shift depending on the energy estimate. The result is a new operator $H_{\text{sign}}=\text{sign}(H-\mathcal{E})$, whose eigenvalues are split into positive and negative subspaces separated by a large gap. The system is then time-evolved under a new perturbed Hamiltonian $\tilde{H}=H_{\text{sign}} + \sqrt{\delta}A/2$, where $A$ is a Hermitian perturbation operator, and $\delta>0$ is a parameter that sets an upper bound on the probability of transitioning to a state of higher energy. This dynamic process allows transitions to favourable lower-energy states without knowledge of their energy gaps, and forbids transitions to unwanted higher-energy ones, leading to an average reduction of energy. Repeating these steps multiple times leads to increasingly better approximations of the ground state.\\

The paper is structured as follows. In~\cref{sec:Prelim} we summarize several preliminary results and definitions that will be used throughout~\cref{sec:Algo}. After defining the access model in~\cref{subsec:access_model}, we introduce the quantum algorithm in~\cref{subsec:incoherent} which uses only a single ancilla qubit for quantum phase estimation (QPE) purposes. To also make our framework compatible as a subroutine within larger workflows that may include amplitude amplification steps, in \cref{subsec:coherent} we provide a coherent (unitary) version with an added logarithmic space overhead per iteration. Lastly, we perform a complexity analysis of the algorithms in~\cref{subsec:Analysis} and discuss a few points on the choice of interaction in~\cref{subsec:choiceofA}.\\

\section{Preliminaries and Definitions}\label{sec:Prelim}
This section provides the preliminary results and definitions needed to understand the algorithm presented in the following section. The main techniques that we employ are generalized quantum signal processing (GQSP)~\cite{motlagh2023generalized} and optimal polynomial approximations to the sign function. We then introduce a few definitions and notations that will be used in the rest of the paper. \\

The results of GQSP state that any polynomial $P(x) = \sum_{n=-k}^m a_n x^n$ of a unitary operator $U$ can be implemented using only $m$ applications of controlled-$U$, $k$ applications of controlled-$U^\dagger$, and $m+k+1$ single-qubit rotations, as long as $|P|\leq1$ on the complex unit circle. We use 
\begin{equation}
R(\theta, \phi, \lambda) = \begin{bmatrix} e^{i(\lambda+\phi)}\cos(\theta) & e^{i\phi}\sin(\theta) \\
e^{i\lambda}\sin(\theta) & -\cos(\theta)  \\ \end{bmatrix} \otimes I\label{eq:R},
\end{equation}
to represent a fully parametrized SU(2) rotation on the ancillary qubit used in QSP. In the following, we write $R(\theta, \phi)$ in place of $R(\theta, \phi, 0)$ for succinctness. 
\begin{theorem}\label{thm:GQSP}
     For any polynomial $P(x)\in \left\{ \sum_{n=-k}^{m} a_n x^n \mid a_n \in \mathbb{C}\right\}$, if for all $\, x\in\mathbb{R}$ it holds that $\, |P(e^{ix})|^2 \leq 1$, then there exist angles $ \vec{\theta}, \vec{\phi} \in \mathbb{R}^{k+m+1}$ and a parameter $ \lambda \in \mathbb{R}$ such that:\\
\[
\begin{bmatrix}
P(U) & .\>\>\>  \\
.\>\>\> & .\>\>\> \\
\end{bmatrix} = \left( \prod_{j=1}^{k} R(\theta_{d+j}, \phi_{d+j}) \begin{bmatrix}
\mathbb{I} & 0  \\
0 & U^\dagger \\
\end{bmatrix} \right)\left( \prod_{j=1}^{m} R(\theta_{j}, \phi_{j}) \begin{bmatrix}
U & 0  \\
0 & \mathbb{I} \\
\end{bmatrix} \right) R(\theta_0, \phi_0, \lambda).
\]
\end{theorem}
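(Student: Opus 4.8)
The plan is to establish the identity in three movements: reduce Laurent polynomials to ordinary ones, construct a complementary polynomial, and then strip off the rotations one layer at a time by induction on the degree. For the first movement, observe that since $U$ is unitary, $\left[\begin{smallmatrix}\mathbb{I} & 0\\ 0 & U^\dagger\end{smallmatrix}\right]=(\mathbb{I}\otimes U^\dagger)\left[\begin{smallmatrix}U & 0\\ 0 & \mathbb{I}\end{smallmatrix}\right]$, and the scalar factor $\mathbb{I}\otimes U^\dagger$ commutes with every $R(\theta,\phi)$ (which acts only on the ancilla) and with $\left[\begin{smallmatrix}U & 0\\ 0 & \mathbb{I}\end{smallmatrix}\right]$. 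Commuting the $k$ resulting copies of $\mathbb{I}\otimes U^\dagger$ to the far left recasts the product in \cref{thm:GQSP} as $(\mathbb{I}\otimes U^{\dagger k})$ times a ``pure'' GQSP circuit with $N:=m+k$ controlled-$U$ layers and a single leading $R(\theta_0,\phi_0,\lambda)$. Setting $\tilde P(x):=x^{k}P(x)=\sum_{n=0}^{N}a_{n-k}x^n$, an ordinary polynomial of degree at most $N$ with $|\tilde P(e^{ix})|=|P(e^{ix})|\le 1$, it suffices to realize $\tilde P$ as the top-left block of the pure circuit: the $(\mathbb{I}\otimes U^{\dagger k})$ prefactor supplies the missing negative powers, and running the commutations backwards redistributes it as $k$ controlled-$U^\dagger$ layers, recovering the stated normal form.

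For the second movement I drop the tilde and write $P$ for the target, now an ordinary polynomial of degree at most $N$ with $\sup_x |P(e^{ix})|\le1$. Since $1-|P(e^{ix})|^2$ is a non-negative trigonometric polynomial of degree at most $N$, the Fej\'er--Riesz theorem supplies an ordinary polynomial $Q$ of degree at most $N$ with $|P(e^{ix})|^2+|Q(e^{ix})|^2=1$ on the unit circle (if $1-|P|^2$ vanishes somewhere on the circle, one realizes $P$ as a limit of polynomials of sup-norm strictly below $1$). I then claim, by induction on $N$, that the pure $N$-layer circuit realizes exactly the ancilla-block matrices $\left[\begin{smallmatrix}P(U) & Q(U)\\ \,\cdot\, & \,\cdot\,\end{smallmatrix}\right]$ with $P,Q$ of degree at most $N$ satisfying $|P|^2+|Q|^2\equiv1$ on the circle, the bottom row being the one forced by unitarity together with the determinant relation $\det=(\mathrm{phase})\,U^N$; concretely, on the circle that bottom row equals $\big(-(\mathrm{phase})\,U^N\overline{Q},\ (\mathrm{phase})\,U^N\overline{P}\big)$. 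The base case $N=0$ is immediate, since $R(\theta_0,\phi_0,\lambda)$ realizes every constant pair $(P,Q)$ of the correct moduli and relative phase.

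For the inductive step -- which I expect to be the main obstacle -- I seek angles $(\theta_N,\phi_N)$ such that $R(\theta_N,\phi_N)^\dagger$ applied to a given matrix of degree $\le N$ makes its top row divisible by $U$ and its bottom row of degree $\le N-1$; left-multiplying by $\left[\begin{smallmatrix}U^\dagger & 0\\ 0 & \mathbb{I}\end{smallmatrix}\right]$ then yields a matrix of degree $\le N-1$ of the same structural type, to which the induction hypothesis applies, and the angles are harvested by iterating down to $N=0$. Demanding that the constant term of the new $(1,1)$ entry vanish yields an equation of the form $e^{-i\phi_N}\cos\theta_N\,p_0=e^{i\alpha}\sin\theta_N\,\overline{q_N}$, where $p_0$ is the constant term of $P$, $q_N$ the leading coefficient of $Q$, and $\alpha$ the current determinant phase; the parallel conditions on the $(1,2)$ entry and on the top-degree coefficient of the new $(2,1)$ entry collapse onto this same equation once one invokes the identity $p_N\overline{p_0}+q_N\overline{q_0}=0$, which is exactly the $U^N$-coefficient of $|P|^2+|Q|^2\equiv1$. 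That identity is the linchpin: it renders the a priori independent vanishing conditions mutually consistent, forces the modulus balance $|p_0|\,|p_N|=|q_0|\,|q_N|$ so that the solution has a \emph{real} $\phi_N$ (fixed mod $\pi$) and a \emph{real} $\theta_N$ with $\tan\theta_N=\pm|p_0|/|q_N|$, and ensures the determinant phase is threaded correctly into the next layer. Everything else reduces to routine linear algebra over the ring of Laurent polynomials.
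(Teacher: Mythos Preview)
The paper's own proof is a one-line deferral: it simply invokes Theorem~6 and Corollary~5 of Ref.~\cite{motlagh2023generalized} and stops. Your argument is correct and unpacks precisely those two cited results. Your first movement---commuting the $k$ copies of $\mathbb{I}\otimes U^\dagger$ to the left to replace the Laurent target $P$ by the ordinary polynomial $\tilde P(x)=x^kP(x)$---is the content of Corollary~5 there; your second and third movements---Fej\'er--Riesz to manufacture a complementary $Q$ with $|P|^2+|Q|^2\equiv1$ on the circle, followed by layer-stripping induction on the degree, with the solvability of the peeling equations hinging on the top-coefficient identity $p_N\overline{p_0}+q_N\overline{q_0}=0$---constitute exactly the proof of Theorem~6. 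So the routes coincide in substance; the only difference is that the paper treats the GQSP construction as a black box while you have written it out. Your version buys self-containment; the paper's buys brevity.
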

\begin{proof}
    The proof follows by combining Theorem 6 and Corollary 5 of Ref.~\cite{motlagh2023generalized}.
\end{proof}
Next, we quote the following regarding the optimal polynomial approximation to the sign function.
\begin{lemma}[\text{\cite[Lemma 25]{QSVT}}]\label{lem:Poly_Approximation_of_sign} $\forall\delta, \epsilon \in (0,1)$, there exists an efficiently computable odd polynomial \(P(x, \epsilon, \delta) \in \mathbb{R}[x]\) of degree \(\mathcal{O}\left(\frac{1}{\epsilon} \log\left(\frac{1}{\delta}\right)\right)\) such that
\begin{enumerate}
\item \(\forall x \in [-1, 1]\), \(|P(x)| \leq 1\).
\item \(\forall x \in [-1, \frac{-\epsilon}{2}] \cup [\frac{\epsilon}{2}, 1]\), \(|P(x) - {\rm sign}(x)| \leq \delta\).
\end{enumerate}
\end{lemma}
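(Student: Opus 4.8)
\emph{Proof plan.} If one wishes to establish this from scratch rather than simply cite it, the natural route is to replace the discontinuous $\mathrm{sign}$ by a smooth surrogate that is already a near‑optimal analytic approximation, and then approximate that surrogate by a polynomial. The surrogate of choice is the error function, since $\mathrm{erf}(kx)\to\mathrm{sign}(x)$ as $k\to\infty$ with error governed by the Gaussian tail. First I would pass to $\mathrm{erf}$: using the standard bound $1-\mathrm{erf}(t)\le e^{-t^2}$ for $t\ge 0$, one gets $|\mathrm{sign}(x)-\mathrm{erf}(kx)|\le e^{-k^2x^2}\le e^{-k^2\epsilon^2/4}$ whenever $|x|\ge\epsilon/2$, so choosing $k=\Theta\!\left(\tfrac{1}{\epsilon}\sqrt{\log(1/\delta)}\right)$ makes this at most $\delta/2$ on $[-1,-\epsilon/2]\cup[\epsilon/2,1]$, while on all of $[-1,1]$ one still has $|\mathrm{erf}(kx)|\le\mathrm{erf}(k)<1$. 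Note also that $\mathrm{erf}(kx)$ is odd, as required of the final polynomial.

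Next I would polynomially approximate $\mathrm{erf}(kx)$. Writing $\mathrm{erf}(kx)=\tfrac{2k}{\sqrt\pi}\int_0^x e^{-k^2u^2}\,du$, it suffices to approximate the even entire Gaussian $u\mapsto e^{-k^2u^2}$ uniformly on $[-1,1]$ to error $\delta'$ by an even polynomial $Q$ — e.g.\ by truncating its Chebyshev expansion, for which a degree of $\mathcal O\!\big(k\sqrt{\log(1/\delta')}\big)$ is known to suffice in the regime $k^2\gtrsim\log(1/\delta')$ relevant here — and then set $\widetilde P(x):=\tfrac{2k}{\sqrt\pi}\int_0^x Q(u)\,du$. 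Integration contracts the error up to the prefactor, so $|\mathrm{erf}(kx)-\widetilde P(x)|\le\tfrac{2k}{\sqrt\pi}\delta'$ on $[-1,1]$; taking $\delta'=\Theta(\delta/k)$ makes this at most $\delta/4$. The polynomial $\widetilde P$ is automatically odd with real, explicitly computable coefficients, and $\deg\widetilde P=\deg Q+1=\mathcal O\!\big(\tfrac1\epsilon\log(1/\delta)\big)$ after inserting the value of $k$ (up to lower‑order logarithmic factors).

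Finally I would normalize. Since $|\widetilde P(x)|\le|\mathrm{erf}(kx)|+\delta/4\le 1+\delta/4$ on $[-1,1]$, the rescaled polynomial $P:=\widetilde P/(1+\delta/4)$ satisfies property (1), namely $|P|\le 1$ on $[-1,1]$, while $|P-\widetilde P|\le\delta/4$ there; rescaling preserves oddness and real coefficients. Adding the three contributions, on $[-1,-\epsilon/2]\cup[\epsilon/2,1]$ one obtains $|P-\mathrm{sign}|\le|P-\widetilde P|+|\widetilde P-\mathrm{erf}(kx)|+|\mathrm{erf}(kx)-\mathrm{sign}|\le \delta/4+\delta/4+\delta/2=\delta$, which is property (2), with the stated degree; re‑absorbing constants into the choices of $k$ and $\delta'$ closes the argument.

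\emph{Main obstacle.} The one genuinely nontrivial ingredient is the degree bound invoked above for the Gaussian: showing that an even polynomial whose degree grows only linearly in $k$ (up to $\sqrt{\log}$ factors) can $\delta'$‑approximate $e^{-k^2u^2}$ on $[-1,1]$. A naive Taylor truncation about $0$ would need degree polynomial in $k^2$, which is far too large; the sharp bound requires a careful tail estimate for the Chebyshev coefficients of the Gaussian (equivalently, the Fourier / Jacobi--Anger route of Low--Chuang). Everything else — the $\mathrm{erfc}$ tail inequality, propagating the error through the integral, and the final rescaling — is routine, and the only other place where care is genuinely required is the exact accounting of the logarithmic factors in the final degree.
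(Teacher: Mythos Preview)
The paper does not prove this lemma at all: it is simply quoted verbatim as \cite[Lemma 25]{QSVT} and used as a black box, so there is no ``paper's own proof'' to compare against. Your sketch is essentially the standard argument that appears in that cited reference (approximate $\mathrm{sign}$ by $\mathrm{erf}(kx)$ via the Gaussian tail, then approximate $\mathrm{erf}$ by integrating a polynomial approximation to the Gaussian, then rescale), and it is correct at the level of a plan; you have also correctly flagged the one nontrivial step, namely the linear-in-$k$ degree bound for the Gaussian, and the need to track the $\log$ factors carefully to land exactly on $\mathcal O\!\left(\tfrac{1}{\epsilon}\log\tfrac{1}{\delta}\right)$ rather than $\tfrac{1}{\epsilon}\log(\tfrac{1}{\delta})$ times an extra $\sqrt{\log(1/\epsilon)}$.
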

Using the polynomial approximation results above, we derive a Fourier approximation version with the same asymptotic complexity to use with the results of GQSP.

\begin{lemma}\label{lem:Fourier_Approximation_of_sign}
(Fourier approximation of the sign function). For all \(\delta, \epsilon \in (0,1)\), there exists an efficiently computable odd polynomial \(S(x, \epsilon, \delta) \in \mathbb{C}[x, x^{-1}]\) of degree \(\mathcal{O}\left(\frac{1}{\epsilon} \log\left(\frac{1}{\delta}\right)\right)\), such that
\begin{enumerate}
\item \(\forall x \in [-\pi, \pi]\), \(|S(e^{ix})| \leq 1\).
\item \(\forall x \in [-\pi+\frac{\epsilon}{2}, \frac{-\epsilon}{2}] \cup [\frac{\epsilon}{2}, \pi-\frac{\epsilon}{2}]\), \(|S(e^{ix}) - {\rm sign}(x)| \leq \delta\).
\end{enumerate}
\end{lemma}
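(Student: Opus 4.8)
Here is the plan.

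\medskip
\noindent\textbf{Proof strategy.} The plan is to obtain $S$ by composing the real polynomial approximation $P(\cdot,\epsilon',\delta)$ from \cref{lem:Poly_Approximation_of_sign} with the map $x \mapsto \sin x$, for a suitable $\epsilon' = \Theta(\epsilon)$ fixed below, and to exploit the elementary identity ${\rm sign}(\sin x) = {\rm sign}(x)$ valid for $x \in (-\pi,\pi)\setminus\{0\}$. Concretely I would set $S(e^{ix}) := P\!\left(\sin x,\epsilon',\delta\right)$. Writing $z = e^{ix}$ and $\sin x = (z - z^{-1})/(2i)$, this composition is manifestly a Laurent polynomial in $z$; moreover, since $P$ is odd, only odd powers of $z$ appear, because $(z - z^{-1})^j = \sum_k \binom{j}{k}(-1)^k z^{j-2k}$ contains only exponents of the same parity as $j$. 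Hence $S \in \mathbb{C}[x,x^{-1}]$ is an odd Laurent polynomial, its degree in $z$ is at most $\deg P = \mathcal{O}(\frac{1}{\epsilon'}\log\frac{1}{\delta})$, which becomes $\mathcal{O}(\frac{1}{\epsilon}\log\frac{1}{\delta})$ once $\epsilon' = \Theta(\epsilon)$, and efficient computability is inherited from $P$.

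\medskip
\noindent\textbf{Property 1.} This is immediate: for $x \in [-\pi,\pi]$ we have $\sin x \in [-1,1]$, so $|S(e^{ix})| = |P(\sin x)| \le 1$ by the first property of \cref{lem:Poly_Approximation_of_sign}.

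\medskip
\noindent\textbf{Property 2.} The key quantitative step is a lower bound on $|\sin x|$ on the relevant intervals. By concavity of $\sin$ on $[0,\pi]$ one has $\sin x \ge \tfrac{2}{\pi}\min(x,\pi - x)$, so for $x \in [\epsilon/2,\pi - \epsilon/2]$ we get $\sin x \ge \tfrac{2}{\pi}\cdot\tfrac{\epsilon}{2} = \tfrac{\epsilon}{\pi}$, and by oddness the analogous bound holds on $[-\pi + \epsilon/2,-\epsilon/2]$. Choosing $\epsilon' := \epsilon/\pi$ (any $\epsilon' \le 2\epsilon/\pi$ works) guarantees $|\sin x| \ge \epsilon'/2$ on both intervals, so $\sin x$ falls in the region where $P$ is $\delta$-close to ${\rm sign}$. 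Since on these intervals $x \in (-\pi,\pi)$ and $x \ne 0$, we have ${\rm sign}(\sin x) = {\rm sign}(x)$, and therefore $|S(e^{ix}) - {\rm sign}(x)| = |P(\sin x) - {\rm sign}(\sin x)| \le \delta$, as claimed.

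\medskip
\noindent\textbf{Anticipated difficulty.} There is no genuine obstacle here; the only points requiring care are the parity bookkeeping that makes $S$ an \emph{odd} Laurent polynomial in $e^{ix}$, and pinning down the constant in $\epsilon' = \Theta(\epsilon)$ via the estimate $\sin x \ge \tfrac{2}{\pi}\min(x,\pi - x)$. The asymptotic degree is preserved precisely because we only rescale $\epsilon$ by a constant factor.
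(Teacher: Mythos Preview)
Your proposal is correct and follows essentially the same route as the paper: define $S(e^{ix}) := P(\sin x)$ via $\sin x = (e^{ix}-e^{-ix})/(2i)$, inherit oddness and degree from $P$, and get Property~2 from ${\rm sign}(\sin x)={\rm sign}(x)$ together with a lower bound on $|\sin x|$ on the stated intervals. The only cosmetic difference is the constant: the paper takes $P(\cdot,\epsilon/2,\delta)$ whereas you take $\epsilon'=\epsilon/\pi$ and justify it via the Jordan-type bound $\sin x \ge \tfrac{2}{\pi}\min(x,\pi-x)$; your quantitative step is in fact cleaner than the paper's one-line claim.
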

\begin{proof}
    Take $P(x, \frac{\epsilon}{2}, \delta)$ from \cref{lem:Poly_Approximation_of_sign} and define the polynomial $S$ that can efficiently be computed from $P$ with deg($S$) = deg($P$) as $S(e^{ix}) =  P(\frac{e^{ix}-e^{-ix}}{2i}) = P(\sin x)$. Then notice that if $|x|> \frac{\epsilon}{4}$, we have $|\sin x| > \frac{\epsilon}{2}$ and ${\rm sign}(\sin x) = {\rm sign}(x)$ on the interval $(-\pi, \pi)$. It follows that $S(x, \epsilon, \delta)$ satisfies all the required conditions, namely $\forall x \in [-\pi, \pi]$,
    \begin{equation}
        |S(e^{ix})| \leq 1,
    \end{equation}
    and $\forall x \in [-\pi+\frac{\epsilon}{2}, \frac{-\epsilon}{2}] \cup [\frac{\epsilon}{2}, \pi-\frac{\epsilon}{2}]$
    \begin{equation}
        |S(e^{ix}) - {\rm sign}(x)| \leq \delta.
    \end{equation}
\end{proof}
We now define notation that will be used throughout this work. For a projector $\Pi$, we denote the reflection about the subspace defined by $\Pi$ as
\begin{align}\label{def:reflection}
        {\rm REF}(\Pi) := \mathbb{I}- 2\,\Pi.
\end{align}
For a Hamiltonian $H = \sum_j \lambda_j \ket{\lambda_j}\bra{\lambda_j}$, and a threshold $\mathcal{E}\in \mathbb{R}$, we denote the projector onto the subspace with energies below $\mathcal{E}$ by
\begin{align}\label{def:proj_threshold}
    \Pi_{<\mathcal{E}}(H) := \sum_{\lambda_j <\mathcal{E}} \ket{\lambda_j}\bra{\lambda_j} .
\end{align}

At this point, we have laid out all the preliminary results and definitions needed for the algorithm. The next definition is needed for the coherent version. For an $m$-qubit Hamiltonian $H$, we define $\textnormal{SHIFT}_n(H)$ as an $(n+m)$-qubit operator that shifts the spectrum of $H$ by the value of the bitstring in the auxiliary $n$-qubit register. Formally,
\begin{align}\label{def:Shift_n}
    \textnormal{SHIFT}_n(H) := \sum_{j=0}^{2^n - 1} \ket{j}\bra{j}\otimes \left(H - \frac{j\cdot2\pi}{2^n}\right),
\end{align}
where $\ket{j}$ stores the bitstring representation of integer $j$. When clear from context, we write $\textnormal{SHIFT}(H)$ instead of $\textnormal{SHIFT}_n(H)$. We show an efficient implementation of the time evolution by $\textnormal{SHIFT}_n(H)$ assuming efficient time evolution exists for $H$.
\begin{lemma}\label{lem:Shift_exponential}
Given a Hamiltonian $H = \sum_j \lambda_j \ket{\lambda_j}\bra{\lambda_j}$, and $n\in \mathbb{N^+}$, we can implement $e^{i\textnormal{SHIFT}_n(H)}$ using one application of $e^{iH}$ and $n$ single qubit rotations. More specifically:
\[
e^{i\textnormal{SHIFT}_n(H)} = \left[\bigotimes_{m=0}^{n-1} \textnormal{Phase}\left(\frac{2^m\cdot2\pi}{2^n}\right)\right] \otimes e^{iH},
\]
where $\textnormal{Phase}(\theta) = \begin{bmatrix} 1 & 0 \\ 0 & e^{-i\theta} \end{bmatrix}$.
\end{lemma}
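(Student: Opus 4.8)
The plan is to write $\textnormal{SHIFT}_n(H)$ as a sum of two commuting operators and then exponentiate each factor separately. Define the diagonal operator $D := \sum_{j=0}^{2^n-1} \frac{j\cdot 2\pi}{2^n}\,\ket{j}\bra{j}$ acting on the auxiliary $n$-qubit register. Then by definition $\textnormal{SHIFT}_n(H) = \mathbb{I}_n \otimes H - D \otimes \mathbb{I}_m$. These two terms commute, since $(\mathbb{I}_n\otimes H)(D\otimes\mathbb{I}_m) = D\otimes H = (D\otimes\mathbb{I}_m)(\mathbb{I}_n\otimes H)$, so
\[
e^{i\textnormal{SHIFT}_n(H)} = e^{-iD\otimes \mathbb{I}_m}\, e^{i\mathbb{I}_n\otimes H} = e^{-iD}\otimes e^{iH}.
\]
This already isolates the single application of $e^{iH}$ on the system register, so it remains to show that $e^{-iD}$ equals the claimed tensor product of single-qubit phase gates.

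Next I would expand $e^{-iD}$ in the computational basis. Since $D$ is diagonal, $e^{-iD} = \sum_{j=0}^{2^n-1} e^{-i j \cdot 2\pi/2^n}\ket{j}\bra{j}$. Writing the integer $j$ in binary as $j = \sum_{m=0}^{n-1} b_m 2^m$ with $b_m\in\{0,1\}$, and using the convention under which $\ket{j} = \ket{b_{n-1}}\otimes\cdots\otimes\ket{b_0}$ matches the tensor-factor ordering in the statement, the phase factorizes as
\[
e^{-i j\cdot 2\pi/2^n} = \prod_{m=0}^{n-1} e^{-i b_m\, 2^m\cdot 2\pi/2^n}.
\]
On the other hand, $\textnormal{Phase}(2^m\cdot 2\pi/2^n)$ acts on the basis state $\ket{b_m}$ by $\ket{b_m}\mapsto e^{-i b_m\,2^m\cdot 2\pi/2^n}\ket{b_m}$. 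Hence the operator $\bigotimes_{m=0}^{n-1}\textnormal{Phase}(2^m\cdot 2\pi/2^n)$ maps $\ket{j}$ to $\big(\prod_{m} e^{-i b_m 2^m\cdot 2\pi/2^n}\big)\ket{j} = e^{-ij\cdot 2\pi/2^n}\ket{j}$, which coincides with the action of $e^{-iD}$ on every basis state. Therefore $e^{-iD} = \bigotimes_{m=0}^{n-1}\textnormal{Phase}(2^m\cdot 2\pi/2^n)$, and combining with the first display gives the claimed identity. The gate count — one use of $e^{iH}$ and $n$ single-qubit rotations — is then read off directly.

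The computation is elementary, so there is no real obstacle; the only point requiring care is fixing the bit-ordering convention for $\ket{j}$ consistently with the order of tensor factors in the product $\bigotimes_{m=0}^{n-1}$, and making explicit that the two summands of $\textnormal{SHIFT}_n(H)$ genuinely commute so that the exponential splits without a Baker–Campbell–Hausdorff correction.
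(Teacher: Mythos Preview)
Your proof is correct and essentially matches the paper's argument: both reduce to writing $j=\sum_m j_m 2^m$ and observing that the phase $e^{-ij\cdot 2\pi/2^n}$ factorizes over the bits, so the diagonal shift is a tensor product of single-qubit \textnormal{Phase} gates. The only cosmetic difference is that you first split $\textnormal{SHIFT}_n(H)$ into two commuting summands and exponentiate each, whereas the paper verifies the identity directly by applying the right-hand side to $\ket{j}\otimes\mathbb{I}$; the underlying computation is the same.
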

\begin{proof}
    Writing $j = \sum_{m=0}^{n-1} j_m \cdot 2^{m}$, where $j_m$ represents the $m$-th bit in the bitstring representation of $j$, we have
    \begin{align}
        \left[\bigotimes_{m=0}^{n-1} \textnormal{Phase}\left(\frac{2^m\cdot2\pi}{2^n}\right)\right] \otimes e^{iH} \cdot(\ket{j}\otimes \mathbb{I}) &= \left[\bigotimes_{m=0}^{n-1} \textnormal{Phase}\left(\frac{2^m\cdot2\pi}{2^n}\right) \ket{j_m}\right] \otimes e^{iH}\nonumber\\
        &= e^{-i \sum_{m=0}^{n-1}j_m \cdot 2^{m}\cdot2\pi/2^n} \ket{j} \otimes e^{iH}\nonumber\\
        &= \ket{j} \otimes e^{i\left(H - (j\cdot2\pi)/2^n\right)}.
    \end{align}
\end{proof}


\section{Dynamical Cooling for Ground State Preparation}\label{sec:Algo}

This section describes the state preparation quantum algorithm that constitutes the main result of this work. We begin by defining the input parameters and access model, which involves assumptions about oracle access to certain operations. We then describe the main quantum algorithm and its coherent version, ending with an analysis of error guarantees and runtime complexity.

\vspace{-1.2cm}

\subsection{Input Parameters and Access Model}\label{subsec:access_model}
We list the parameters and conditions in the framework. First, we view the system Hamiltonian $H$ as an input parameter, which is assumed to be normalized such that $\|H\|\leq 1$. We assume oracular access to $e^{iH}$, but the algorithm can easily be modified for block-encoding access to $H$. We impose no condition on the initial state $\ket{\psi_0}$, although its quality can impact the performance of the algorithm.
We also assume as input a Hermitian perturbation term $A$ satisfying $\|A\|\leq 1$, acting on the Hilbert space of the system. We assume block-encoding access $\bra{0}U_A\ket{0} = A$, but the algorithm can easily be modified for access to $e^{iA}$ instead.\\

We also require a precision parameter $\epsilon\in (0, 1)$ that specifies the resolution used to distinguish states with different energies. This is relevant when talking about states of higher energy; specifically, we consider it a failure to transition to a state with energy at least $\epsilon$ greater than the input state.
We further require another parameter $\delta\in (0, 1)$ that specifies an upper bound on the probability of failure in each step. Given the iterative nature of the algorithm, we could equivalently replace the $\delta$ parameter with another parameter $d$ denoting the maximum number of iterations, and set $\delta = 1/d$. This ensures that the algorithm has a constant probability of success after $d$ iterations (more in \cref{subsec:Analysis}). With this setup, we are now ready to introduce the quantum algorithm in \cref{subsec:incoherent}, followed by a coherent version in \cref{subsec:coherent}.

\vspace{-1.2cm}
\subsection{Algorithm}\label{subsec:incoherent}
Our algorithm starts by performing an iterative QSP-based phase estimation~\cite{dong2022ground} on the input state $\ket{\psi_0}$ with accuracy $\epsilon$ and probability of success $\delta$. This projects our state into an eigenstate $\ket{\lambda_k}$ (or a linear combination of eigenstates) with energy $\mathcal{E}_k\pm \epsilon/2$. Using iterative methods~\cite{dong2022ground}, it is possible to use just a single extra qubit for the QPE protocol. We then employ QSP to implement the unperturbed Hamiltonian
\begin{equation}
    H_{\text{sign}} = S(H-\mathcal{E}_k-\epsilon, \epsilon,\delta),
\end{equation}
where \(S(x, \epsilon, \delta)\) is the approximate sign function defined according to~\cref{lem:Fourier_Approximation_of_sign}, and the additional shift by $\epsilon$ is to account for the transition interval of the approximation at the discontinuous point of the sign function. Notice that all eigenstates of $H$ with energy at most $\mathcal{E}_k + \epsilon/2$ have energy in $[-1, -1+\delta]$ with respect to $H_{\text{sign}}$, and all states with energy at least $\mathcal{E}_k + 3\epsilon/2$ now have energy in $[1-\delta, 1]$. Hence, making use of this large gap, due to conservation of energy in the weak interaction limit, we can ``forbid'' transitions to states of higher energy. To do so, we define a perturbed Hamiltonian
\begin{equation}
    \tilde{H} = H_{\text{sign}} + \frac{\sqrt{\delta}}{2}A,
\end{equation}
and evolve under this operator for time  $t = \pi\cdot\left\lceil \frac{1}{\pi \sqrt{\delta}}\right\rceil\sim\mathcal{O}(\frac{1}{\sqrt{\delta}})$ to obtain the output state $\ket{\phi} = e^{-i\tilde{H}t}\ket{\lambda_k}$.\\

Based on \cref{thm:hamLeak,thm:how_it_be} discussed in the next section, except for a leakage probability $\delta$, the output state $\ket{\phi} \approx e^{-i\Pi_{<\mathcal{E}_k}(A/2)\Pi_{<\mathcal{E}_k}}\ket{\lambda_k}$ has a higher overlap with the lower energy subspace than the original state $\ket{\lambda_k}$. Hence, its energy is decreased on average. We then perform another round of QPE to get an updated energy estimate for the next cooling step. This process is repeated until a stopping condition is met; for example reaching a maximum number of steps, observing an energy below a target value, or many consecutive steps failing to decrease the energy. We describe the quantum algorithm in full detail below.

\pagebreak

\begin{figure}[t]
\centering
\includegraphics[width=0.85\textwidth]{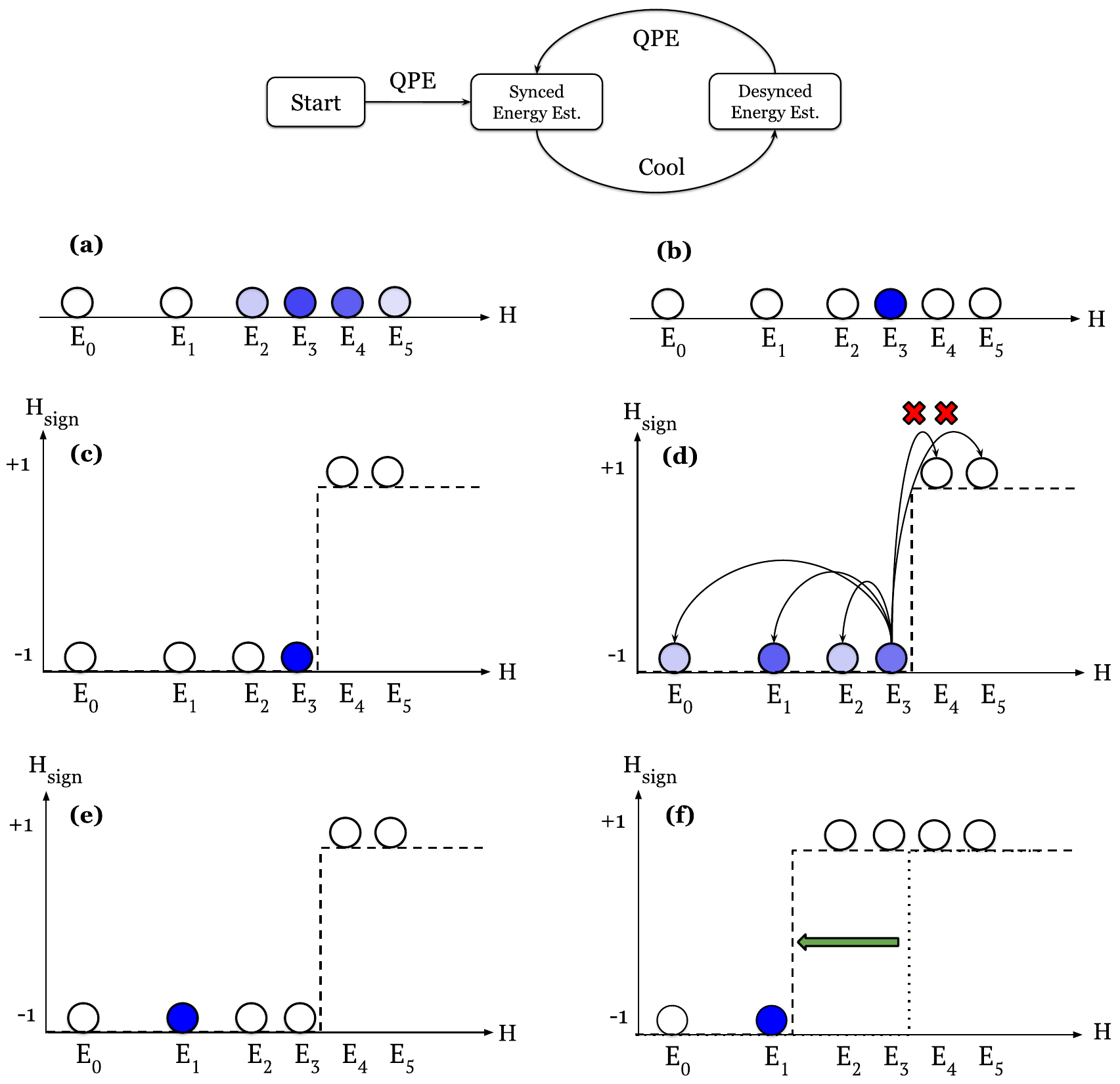}
\caption{Visual representation of how a state might evolve within the quantum algorithm. In this schematic, the horizontal axis represents energy with respect to $H$, the vertical axis denotes energy with respect to $H_{\text{sign}}$, and circles are used to denote eigenstates where the magnitude of their amplitude is represented by color intensity. \textbf{(a)} We start from an initial state which may be a superposition of eigenstates. \textbf{(b)} We perform a QPE protocol that projects onto an approximate eigenstate and gives an estimate of the state's energy, in this case $E_3$.  \textbf{(c)} We define $H_{\text{sign}}$ using the energy estimate. With respect to this operator, states are split into a negative subspace with eigenvalue $-1$, and a positive subspace with eigenvalue $+1$. \textbf{(d)} Time evolution under a perturbed Hamiltonian $\tilde{H}$ leads to transitions in the low-energy subspace, while transitions to higher energy states are forbidden due to energy conservation. \textbf{(e)} We then perform another round of QPE, in this case leading to a lower energy estimate. \textbf{(f)} A new $H_{\text{sign}}$ is defined that is further shifted toward lower energies. }
\label{fig:complete}
\end{figure}

\textbf{Quantum algorithm for ground-state preparation using dynamical cooling}

\begin{enumerate}
    \item Given an input Hamiltonian $H$, an input state $\ket{\psi_0}$, a maximum number of iterations $d$, and an energy estimation precision $\epsilon$, perform one round of QPE with accuracy $\epsilon$ and probability of success $1/d$. Denote the outcome energy estimate as $\mathcal{E}_k$ and the output state as $\ket{\lambda_k}$. 
    \item Given a failure probability parameter $\delta = 1/d$, use QSP to block-encode the transformed Hamiltonian $H_{\text{sign}}=S(H-\mathcal{E}_k-\epsilon, \epsilon,\delta)$ as in \cref{lem:Fourier_Approximation_of_sign}, leveraging oracle access to $e^{iHt}$. Note that the shift depends on the energy estimate from the previous step.
    \item Implement a block-encoding of the perturbed Hamiltonian $\tilde{H}=H_{\text{sign}}+(\sqrt{\delta}/2)A$ via a linear combination of unitaries subroutine~\cite{childs2012HamSimLCU} on the block-encodings of each component. Use QSP to evolve the state $\ket{\lambda_k}$ under $\tilde{H}$ for time $t = \pi\cdot\left\lceil \frac{1}{\pi \sqrt{\delta}}\right\rceil$, producing the output state $\ket{\psi_1}=e^{-i\tilde{H}t}\ket{\lambda_k}$. Use $\ket{\psi_1}$ as the input for the next step.
    \item Repeat steps 1 to 3 until a stopping condition is met. 
\end{enumerate}

A high-level schematic overview of the quantum algorithm is shown in \cref{fig:complete}. The key strategy is the transformation of the system Hamiltonian by the shifted sign function. This results in a splitting of the Hilbert space into two effective subspaces corresponding to negative and positive eigenvalues, after the shift by $\mathcal{E}_k$. Within each subspace, the spectrum is approximately flat --- a property that allows transitions to any eigenstate in the negative subspace during time evolution, while ensuring conservation of energy. Moreover, the large gap between negative and positive subspaces forbids undesired transitions to states of higher energy, a failure that is bounded by the parameter $\delta$ --- a lower chance of failure requires longer time evolution. As a result, the output state after time evolution has lower average energy than it started with, leading to cooling towards the ground state. We analyze error guarantees and runtime complexity in \cref{subsec:Analysis}. 

\subsection{Unitary Version}\label{subsec:coherent}
We now describe a coherent version of the algorithm that doesn't require any mid-circuit measurements, designed for use as a subroutine within larger workflows that might include an amplitude amplification procedure. The arising difficulty is that without measurements in QPE, the input state prior to transforming the Hamiltonian with QSP can generally be any linear combination of eigenstates at different energies. This means that $H_{\text{sign}}$ needs to be an approximation to the sign function shifted by different energies in superposition.\\

To address this, we must store energy estimates in superposition in a new quantum register, which can be done via a coherent QPE subroutine, i.e., a QPE without measurements. The size of each register scales logarithmically with accuracy. This QPE must have accuracy $\epsilon$ and probability of success $1-\delta$, which is achieved most efficiently using a QSP-based approach similar to the one in~\cite{martyn2021grand}. Thus, we define our unperturbed Hamiltonian over the combined space of the system and energy register as
\begin{equation}
    H_{\text{sign}} = S(\textnormal{SHIFT}(H)-\epsilon, \epsilon,\delta),\label{eq:H0def}
\end{equation}
where \textnormal{SHIFT}($H$) in~\cref{eq:H0def} is defined according to \cref{def:Shift_n}. We also extend $A$ by the identity over the energy register and define the perturbed Hamiltonian as
\begin{equation}
    \tilde{H} = H_{\text{sign}} + \frac{\sqrt{\delta}}{2}(\mathbb{I} \otimes A).
\end{equation}
The algorithm then follows as before, except that now we have to add a new energy register for each iteration; this only has a logarithmic overhead.

\subsection{Analysis}\label{subsec:Analysis}

As stated in previous sections, a key feature of the algorithm is that time evolution under the perturbed Hamiltonian $\tilde{H}$ will approximately retain the state inside the low-energy subspace. This claim is summarized in the following theorem, which we prove in \cref{appendix:first}.

\begin{restatable}[Evolution Leakage Bound]{theorem}{leaking}
\label{thm:hamLeak}%
Let $A$ be a Hermitian operator with $\|A\|\leq 1$ and let $\Pi$ be a projector. Let $\delta \in (0,1)$ and $t = \pi\cdot\left\lceil \frac{1}{\pi \sqrt{\delta}}\right\rceil$. Define $\tilde{H} = {\rm REF}(\Pi) + \frac{\sqrt{\delta}}{2}A$. Then:
\[
\left\|\left(\mathbb{I} - \Pi\right)e^{-i\tilde{H}t}\,\Pi\right\|^2 \leq \delta.
\]
\end{restatable}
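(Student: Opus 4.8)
The goal is to bound the leakage $\|(\mathbb{I}-\Pi)e^{-i\tilde Ht}\Pi\|$ where $\tilde H = \mathrm{REF}(\Pi) + \tfrac{\sqrt\delta}{2}A$ and $\mathrm{REF}(\Pi) = \mathbb{I} - 2\Pi$. My first move is to remove the unperturbed piece by passing to the interaction picture. Since $\mathrm{REF}(\Pi)$ has only eigenvalues $\pm 1$, we have $e^{-i\mathrm{REF}(\Pi)t} = e^{-it}\Pi^\perp + e^{it}\Pi$ (with $\Pi^\perp = \mathbb{I}-\Pi$), a very simple unitary. Write $e^{-i\tilde Ht} = e^{-i\mathrm{REF}(\Pi)t}\,T\exp\!\big(-i\tfrac{\sqrt\delta}{2}\int_0^t A(s)\,ds\big)$ where $A(s) = e^{is\mathrm{REF}(\Pi)}A e^{-is\mathrm{REF}(\Pi)}$ is the perturbation in the interaction picture. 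Because $e^{-i\mathrm{REF}(\Pi)t}$ preserves both $\Pi$ and $\Pi^\perp$, the leakage is unchanged: $\|\Pi^\perp e^{-i\tilde Ht}\Pi\| = \|\Pi^\perp V(t)\Pi\|$ where $V(t)$ is the time-ordered exponential above.

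Next I would expand $V(t)$ to first order in $\sqrt\delta$ via the Dyson series: $V(t) = \mathbb{I} - i\tfrac{\sqrt\delta}{2}\int_0^t A(s)\,ds + O(\delta)$, and since the zeroth-order term $\mathbb{I}$ contributes nothing to $\Pi^\perp(\cdot)\Pi$, the leading contribution is $\tfrac{\sqrt\delta}{2}\,\Pi^\perp\!\int_0^t A(s)\,ds\,\Pi$. The crucial computation is the off-diagonal block of $\int_0^t A(s)\,ds$. Decomposing $A$ into its four blocks relative to $\Pi$, conjugation by $e^{is\mathrm{REF}(\Pi)}$ multiplies the $\Pi^\perp(\cdot)\Pi$ block by the phase $e^{is(-1)}e^{-is(+1)}\cdot(\text{sign bookkeeping}) = e^{-2is}$ (up to the precise sign convention, which I'd pin down carefully). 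Therefore $\Pi^\perp\!\int_0^t A(s)\,ds\,\Pi = \big(\int_0^t e^{-2is}ds\big)\,\Pi^\perp A\Pi = \tfrac{e^{-2it}-1}{-2i}\,\Pi^\perp A\Pi$, and with $t = \pi\lceil 1/(\pi\sqrt\delta)\rceil$ a multiple of $\pi$, we get $e^{-2it} = 1$, so this integral vanishes exactly. The point of choosing $t$ a multiple of $\pi$ is precisely to kill the first-order term.

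The leakage is therefore entirely second order: $\|\Pi^\perp V(t)\Pi\|$ is controlled by the $O(\delta)$ Dyson remainder, $\tfrac14(\tfrac{\sqrt\delta}{2}\cdot 2)\!\cdots$ — more precisely, bounding the second-order Dyson term $\tfrac{\delta}{4}\int_0^t\!\int_0^{s_1} A(s_1)A(s_2)\,ds_2\,ds_1$ in operator norm gives at most $\tfrac{\delta}{4}\cdot\tfrac{t^2}{2}\|A\|^2 \le \tfrac{\delta t^2}{8}$. Since $t \le \pi(1/(\pi\sqrt\delta)+1) = 1/\sqrt\delta + \pi$, we have $\delta t^2 \le (1+\pi\sqrt\delta)^2 \le$ a constant, which would give $\|\Pi^\perp e^{-i\tilde Ht}\Pi\|^2 = O(\delta)$ — but to land the clean bound $\le \delta$ with no constant I expect one needs a sharper argument than naive Dyson truncation: e.g. solving the $2\times2$ block dynamics (in the interaction picture the off-diagonal block satisfies a driven equation that can be integrated more tightly), or using a resolvent/variation-of-constants identity for $\Pi^\perp e^{-i\tilde Ht}\Pi$ directly and exploiting the exact cancellation at $t\in\pi\mathbb{Z}$ together with $\|A\|\le 1$. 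The main obstacle is thus not the structure — which collapses nicely because $\mathrm{REF}(\Pi)$ is an involution — but squeezing the constant down to exactly $1$; I would look for an exact closed form for the leakage amplitude (analogous to a Rabi-oscillation formula, where the off-resonant drive of strength $\sqrt\delta/2$ against a gap of $2$ produces a transition amplitude whose maximum is $\le \sqrt\delta$) rather than bounding term by term.
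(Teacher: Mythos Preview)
Your setup is exactly the paper's: pass to the interaction picture with respect to $H_0=\mathrm{REF}(\Pi)$, expand $V(t)$ as a Dyson series, and observe that the off-diagonal block of the first-order term is $\bigl(\int_0^t e^{\pm 2is}\,ds\bigr)\Pi^\perp A\Pi$, which vanishes when $t\in\pi\mathbb{Z}$. So far so good.

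The gap is in the next step. Your naive bound on the $k$th Dyson term is $\|U_k(t,0)\|\le \dfrac{(t\sqrt\delta/2)^k}{k!}$, and with $t\sqrt\delta\approx 1$ this is $O(1/k!)$, so summing from $k=2$ gives an $O(1)$ bound on the \emph{norm}, not $O(\sqrt\delta)$. Your sentence ``which would give $\|\Pi^\perp e^{-i\tilde Ht}\Pi\|^2=O(\delta)$'' is therefore not justified by the computation preceding it; the naive tail bound yields only $O(1)$, not merely a bad constant in front of $\delta$. This is the point where the argument actually needs work, not just polishing.

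The paper's resolution is \emph{not} to abandon term-by-term bounds for a closed-form Rabi-type calculation (which cannot work for general $A$, since the dynamics do not reduce to a two-level system). Instead it sharpens the per-term leakage bound: one proves by induction on $k$ that
\[
\bigl\|(\mathbb{I}-\Pi)\,U_k(t,0)\,\Pi\bigr\|\;\le\;\frac{t^{\,k-1}\,\delta^{k/2}}{2\,(k-1)!},
\]
i.e.\ one fewer power of $t$ than the trivial bound. The mechanism is exactly the oscillating phase you already identified: at the first place where the product crosses from $\Pi$ to $\Pi^\perp$, an $e^{2it_j}$ appears in the innermost surviving integral, and integration by parts converts one $\int_0^{\cdot}\,dt_j$ (worth a factor $t$) into a bounded oscillatory integral (worth a constant). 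Carrying this through the induction gives the displayed bound; summing from $k=2$ with $t\sqrt\delta\le 1$ then yields $\tfrac{\sqrt\delta}{2}\sum_{k\ge 1}\tfrac{1}{k!}=\tfrac{\sqrt\delta}{2}(e-1)<\sqrt\delta$, which squares to the claimed $\delta$. So the missing idea is precisely this ``one integration by parts per crossing'' improvement in each Dyson order, not a global exact formula.
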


Taking $\text{sign}(H-\mathcal{E}) = {\rm REF}(\Pi_{<\mathcal{E}}(H))$, the above result states that the probability of transitioning to states of higher energy is upper bounded by $\delta\sim\mathcal{O}(1/\sqrt{t})$. Furthermore, we show that a successful step evolves the state, up to an error smaller than $\delta$, according to the transformation $e^{-i\Pi_{<\mathcal{E}}(A/2)\Pi_{<\mathcal{E}}}$ determined by the choice of $A$. This is shown in the following theorem, proven in \cref{appendix:second}.

\begin{restatable}[Effective Evolution]{theorem}{dissipate}
\label{thm:how_it_be}%
Let $A$ be a Hermitian operator with $\|A\|\leq 1$ and let $\Pi$ be a projector. Let also $\delta \in (0,1)$ be a failure probability and let $t = \frac{1}{\sqrt{\delta}}$ the time of evolution. Define $\tilde{H} = {\rm REF}(\Pi) + \frac{\sqrt{\delta}}{2}A$. Then:
\[
\left\|\Pi e^{-i\tilde{H}t}\,\Pi - e^{-i\Pi(A/2)\Pi}\,\Pi\right\|^2 \leq \delta.
\]
\end{restatable}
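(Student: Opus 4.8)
\emph{Plan.} I would run an averaging (rotating-wave) argument in the interaction picture. Write $\tilde H = H_0 + V$ with $H_0 := {\rm REF}(\Pi) = \mathbb I - 2\Pi$ and $V := \tfrac{\sqrt\delta}{2}A$, and factor $e^{-i\tilde H t} = e^{-iH_0 t}\,W(t)$, where $W(t) := \mathcal T\exp\!\big(-i\int_0^t V_I(s)\,ds\big)$ is the interaction-picture propagator and $V_I(s) := e^{iH_0 s}V e^{-iH_0 s}$. Since $H_0 = -\mathbb I$ on $\mathrm{range}(\Pi)$ and $+\mathbb I$ on its orthogonal complement, $e^{-iH_0 t}$ is block-diagonal with respect to the splitting $\Pi\oplus(\mathbb I-\Pi)$ and acts on $\mathrm{range}(\Pi)$ as the scalar $e^{it}$; for the evolution times used by the algorithm (integer multiples of $\pi$) this is a global sign $\pm 1$ that is immaterial for the subsequent use of the output state. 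Thus the problem reduces to controlling $\|\Pi W(t)\Pi - e^{-i\Pi(A/2)\Pi}\Pi\|^2$ up to that phase.

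\emph{Block structure.} Expanding $A$ into $2\times 2$ blocks $A_{00},A_{01},A_{10},A_{11}$ relative to $\Pi\oplus(\mathbb I-\Pi)$ and using $e^{iH_0 s} = e^{-is}\Pi + e^{is}(\mathbb I-\Pi)$, one gets
\[
V_I(s) = \tfrac{\sqrt\delta}{2}\big(\,A_{00} + A_{11} + e^{-2is}A_{01} + e^{2is}A_{10}\,\big).
\]
The block-diagonal part $D := \tfrac{\sqrt\delta}{2}(A_{00}+A_{11})$ is time-independent, and a direct computation gives $\Pi e^{-iDt}\Pi = e^{-i(\sqrt\delta\, t/2)\,\Pi A\Pi}\Pi = e^{-i\Pi(A/2)\Pi}\Pi$ using $\sqrt\delta\, t = 1$. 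The block-off-diagonal part $O(s) := \tfrac{\sqrt\delta}{2}\big(e^{-2is}A_{01} + e^{2is}A_{10}\big)$ has amplitude only $O(\sqrt\delta)$ but oscillates at the $O(1)$ frequency $2$; the heart of the argument is that its net effect over a window of length $t = 1/\sqrt\delta$ is $O(\sqrt\delta)$, not $O(1)$.

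\emph{Averaging and the $\Pi$-sandwich.} Pass to a second interaction picture with respect to $D$: set $X(t):=e^{iDt}W(t)$, a unitary solving $\dot X = -i\,\tilde O(s)\,X$ with $\tilde O(s) := e^{iDs}O(s)e^{-iDs}$, still block-off-diagonal, of norm $\le \tfrac{\sqrt\delta}{2}$, and still carrying the phases $e^{\mp 2is}$. The key lemma is that $G(s):=\int_0^s\tilde O(u)\,du$ is uniformly small: writing $\tilde O(u) = e^{-2iu}g_-(u)+e^{2iu}g_+(u)$ with $\|g_\pm\| = O(\sqrt\delta)$ and $\|g_\pm'\| = O(\delta)$ (the extra $\sqrt\delta$ coming from $\|D\| = O(\sqrt\delta)$), integrating the phases by parts leaves boundary terms of size $O(\sqrt\delta)$ plus a remainder $\le \int_0^s O(\delta)\,du = O(\delta)\cdot t = O(\sqrt\delta)$, so $\|G(s)\| = O(\sqrt\delta)$ for all $s\in[0,t]$, with no growth in $t$. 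From Duhamel, $X(t)-\mathbb I = -i\int_0^t\tilde O(s)X(s)\,ds$, and one integration by parts ($\tilde O = G'$) gives
\[
X(t)-\mathbb I = -i\,G(t)X(t) + \int_0^t G(s)\,\tilde O(s)\,X(s)\,ds .
\]
Sandwiching by $\Pi$ and using that $G$ and $\tilde O$ are block-off-diagonal, $\Pi G(t)X(t)\Pi = \Pi G(t)(\mathbb I-\Pi)X(t)\Pi$, and $\|(\mathbb I-\Pi)X(t)\Pi\| = \|(\mathbb I-\Pi)e^{-i\tilde H t}\Pi\| \le \sqrt\delta$ either by \cref{thm:hamLeak} or by applying the same $G$-estimate to the off-diagonal block, so this term is $O(\delta)$, while the integral term is $\le\int_0^t\|G(s)\|\|\tilde O(s)\|\,ds = O(\sqrt\delta)$. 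Hence $\|\Pi W(t)\Pi - \Pi e^{-iDt}\Pi\| = \|\Pi(X(t)-\mathbb I)\Pi\| = O(\sqrt\delta)$, which combined with $\Pi e^{-iDt}\Pi = e^{-i\Pi(A/2)\Pi}\Pi$ gives the stated bound.

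\emph{Main obstacle.} The structure is routine; the real work is constant tracking. One must make every $O(\cdot)$ explicit — the boundary and remainder pieces of the bound on $\|G(s)\|$, the use of the leakage bound, the fact that the second- and higher-order terms of the Dyson/Magnus expansion are genuinely dominated by the single integration-by-parts term, and the slight mismatch between $t = 1/\sqrt\delta$ and the algorithm's $t = \pi\lceil 1/(\pi\sqrt\delta)\rceil$ — and verify that together they close the estimate at exactly $\|\cdot\|^2\le\delta$, rather than at some larger multiple of $\delta$. That final accounting, not the averaging idea itself, is where the difficulty lies.
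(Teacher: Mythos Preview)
Your approach is sound and genuinely different from the paper's. Both arguments work in the interaction picture and exploit that the off-diagonal coupling oscillates at frequency $2$, but the organization is distinct. The paper expands the full Dyson series of $U(t,0)$ term by term, inserts $k-1$ copies of $\mathbb I=(\mathbb I-\Pi)+\Pi$ between the $A$'s in the $k$th term, and splits the sum over binary strings $\vec J$: the all-ones string reproduces $e^{-i\Pi(A/2)\Pi}\Pi$ exactly (since $t\sqrt\delta=1$), and each remaining string contributes an oscillatory integral $\Phi_k(\vec J)$ which a single integration by parts bounds by $t^{k-1}/(k-1)!$. Summing gives the explicit constant $\tfrac{1}{2}(e-1)<1$, so $\|\cdot\|^2\le\delta$ closes for every $\delta\in(0,1)$. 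Your route instead factors out the block-diagonal part $D$ via a second interaction picture and controls the antiderivative $G$ of the remaining oscillatory generator once and for all; this is conceptually cleaner and avoids the combinatorics.

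The tradeoff is exactly the constant tracking you flag. Carrying your estimates through, one gets roughly $\|G(s)\|\le \tfrac{3}{4}\sqrt\delta$ and hence $\|\Pi(X(t)-\mathbb I)\Pi\|\le \tfrac{3}{8}\sqrt\delta + O(\delta)$; this closes $\|\cdot\|^2\le\delta$ comfortably for small $\delta$ (the regime of interest, since $\delta=1/d$), but not uniformly on $(0,1)$ without further tightening (e.g.\ iterating the integration by parts or using a Gr\"onwall refinement on the off-diagonal block). The paper's term-by-term summation yields the uniform constant for free. Your observation about the global phase $e^{it}$ from $e^{-iH_0 t}$ is well taken: the paper's proof silently identifies $\Pi e^{-i\hat H t}\Pi$ with $\sum_k\Pi U_k(t,0)\Pi$, which is exactly this phase away from the interaction-picture propagator, so what is actually established in both approaches is the bound for $U(t,0)$; the phase is indeed immaterial for the algorithm.
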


Lastly, the following theorem gives the complexity of running the algorithm for $d$ successful iterations with an $\epsilon$-accurate energy resolution.
\begin{theorem}[Complexity Analysis]\label{claim:com_analysis}
    Running the iterative algorithm with an $\epsilon$-accurate energy resolution for $d$ iterations with a constant probability of success requires $\Tilde{\mathcal{O}}\left(d^{\,3/2}/\epsilon\right)$ queries to $e^{iH}$, and $\Tilde{\mathcal{O}}(d^{\,3/2})$ queries to $U_A$. Furthermore, the incoherent version requires $\mathcal{O}\left(1\right)$ ancilla qubits and the coherent version requires $\mathcal{O}\left(d\log(1/\epsilon)\right)$ ancilla qubits.
\end{theorem}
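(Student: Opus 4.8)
The plan is to assemble the per-iteration costs from the three subroutines — coarse phase estimation, the QSP implementation of $H_{\text{sign}}$, and the time evolution under $\tilde H$ — then sum over $d$ iterations, and finally account for boosting the overall success probability to a constant. First I would fix the per-step failure parameter. Since each of the $d$ iterations must succeed (the leakage bound of \cref{thm:hamLeak} fails with probability $\delta$ per step, and likewise the phase estimation), setting $\delta = \Theta(1/d)$ makes the probability that all $d$ steps succeed at least $(1-1/d)^d \to e^{-1}$, i.e.\ constant; this is the choice already flagged in \cref{subsec:access_model}. So every $\delta$ appearing below should be read as $1/d$, up to constants.

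Next I would cost one iteration. (i) \emph{Phase estimation:} iterative/QSP-based QPE~\cite{dong2022ground,martyn2021grand} to accuracy $\epsilon$ and failure probability $\delta$ uses $\tilde{\mathcal O}(1/\epsilon \cdot \log(1/\delta)) = \tilde{\mathcal O}(1/\epsilon)$ queries to $e^{iH}$ (the $\log(1/\delta) = \log d$ factor is absorbed into the tilde), and $\mathcal O(1)$ ancillas in the incoherent case, $\mathcal O(\log(1/\epsilon))$ in the coherent case. (ii) \emph{Building $H_{\text{sign}}$:} by \cref{lem:Fourier_Approximation_of_sign} the Fourier approximant $S$ has degree $\mathcal O\big(\tfrac1\epsilon\log\tfrac1\delta\big)$, and by \cref{thm:GQSP} (GQSP) a degree-$D$ Laurent polynomial of a unitary costs $D$ controlled applications of $e^{iH}$ and $e^{-iH}$; hence a single application of the block-encoding of $H_{\text{sign}}$ costs $\tilde{\mathcal O}(1/\epsilon)$ queries to $e^{iH}$. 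In the coherent version one uses $e^{i\,\textnormal{SHIFT}(H)}$ instead, which by \cref{lem:Shift_exponential} is one call to $e^{iH}$ plus $n=\mathcal O(\log(1/\epsilon))$ single-qubit phase gates, so the query count is unchanged. (iii) \emph{Time evolution under $\tilde H$:} the evolution time is $t = \pi\lceil 1/(\pi\sqrt\delta)\rceil = \Theta(1/\sqrt\delta) = \Theta(\sqrt d)$. Implementing $e^{-i\tilde H t}$ to fixed accuracy via QSP on the LCU block-encoding of $\tilde H = H_{\text{sign}} + (\sqrt\delta/2)A$ takes $\tilde{\mathcal O}(t) = \tilde{\mathcal O}(\sqrt d)$ queries to the block-encoding of $\tilde H$; each such query invokes one block-encoding of $H_{\text{sign}}$ (itself $\tilde{\mathcal O}(1/\epsilon)$ queries to $e^{iH}$) and one call to $U_A$. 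So one iteration costs $\tilde{\mathcal O}(\sqrt d/\epsilon)$ queries to $e^{iH}$ and $\tilde{\mathcal O}(\sqrt d)$ queries to $U_A$.

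Finally I would multiply by the number of iterations: $d$ iterations give $\tilde{\mathcal O}(d \cdot \sqrt d/\epsilon) = \tilde{\mathcal O}(d^{3/2}/\epsilon)$ queries to $e^{iH}$ and $\tilde{\mathcal O}(d^{3/2})$ queries to $U_A$, with no extra factor needed for success boosting since the constant-probability guarantee is already baked into $\delta = 1/d$. For the qubit counts: the incoherent version reuses a single ancilla for QPE and the standard QSP/LCU ancillas (a constant number), giving $\mathcal O(1)$; the coherent version additionally carries one $n = \mathcal O(\log(1/\epsilon))$-qubit energy register per iteration that is not measured/recycled, giving $\mathcal O(d\log(1/\epsilon))$ ancillas.

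I expect the main obstacle — or at least the point requiring the most care — to be (iii): verifying that the LCU-plus-QSP implementation of $e^{-i\tilde H t}$ has \emph{additive} error that, combined with the degree-$\mathcal O(1/\epsilon)$ polynomial error $\delta$ in $S$ and the QPE error, still leaves total error controlled, so that the $\tilde{\mathcal O}(\cdot)$ in each layer genuinely hides only polylog$(d,1/\epsilon)$ factors and the errors do not compound across the $d$ iterations into something worse than a polylog. Concretely one must check that the simulation precision inside each iteration can be taken to be, say, $\mathrm{poly}(\delta)$ at only a $\log$ cost, and that the triangle-inequality accumulation over $d$ steps stays within the claimed budget; this is routine but is where the bookkeeping lives.
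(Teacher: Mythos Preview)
Your proposal is correct and follows essentially the same route as the paper's own proof: fix $\delta=1/d$ so that $(1-1/d)^d$ gives constant overall success, cost QPE and the GQSP block-encoding of $H_{\text{sign}}$ at $\tilde{\mathcal O}(1/\epsilon)$ queries each, cost the time evolution at $\tilde{\mathcal O}(1/\sqrt\delta)=\tilde{\mathcal O}(\sqrt d)$ block-encoding calls, multiply through and sum over $d$ iterations. If anything, you are slightly more explicit than the paper --- you spell out the ancilla accounting and flag the error-propagation bookkeeping, which the paper's proof leaves implicit.
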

\begin{proof}
    First notice that the approximation $S(x, \epsilon, \delta)$ to the sign function requires a Fourier series of degree $\mathcal{O}(\log(1/\delta)/\epsilon)$ according to \cref{lem:Fourier_Approximation_of_sign}, and thus can be implemented via QSP using $\mathcal{O}(\log(1/\delta)/\epsilon)$ calls to $e^{iH}$ and its inverse according to \cref{thm:GQSP}. Furthermore, we time-evolve the perturbed Hamiltonian for $t\sim\mathcal{O}(1/\sqrt{\delta})$. 
Lastly, an $\epsilon$-accurate QPE protocol with a probability of success $1-\delta$ can be implemented with $\mathcal{O}(\log(1/\epsilon)\log(1/\delta)/\epsilon)$ queries to $e^{iH}$ using a QSP-based approach~\cite{dong2022ground, martyn2021grand}. 
Putting it all together,  the total cost of our algorithm per iteration is $\Tilde{\mathcal{O}}\left(\mfrac{1}{\sqrt{\delta}\cdot\epsilon}\right)$ queries to $e^{iH}$, plus $\Tilde{\mathcal{O}}\left(\mfrac{1}{\sqrt{\delta}}\right)$ queries to $U_A$, up to logarithmic factors. \\

By setting $\delta=1/d$, where $d$ is the maximum number of iterations, the probability of success in each iteration is lower bounded by $1-1/d$ (\cref{thm:hamLeak}), and the probability of all $d$ iterations succeeding is lower bounded by $(1-1/d)^d\geq 1/4$ for $d\geq 2$. 
Therefore, the total cost of carrying out the algorithm for $d$ iterations with a constant probability of success is $\Tilde{\mathcal{O}}\left(d^{\,3/2}/\epsilon\right)$ queries to $e^{iH}$, and $\Tilde{\mathcal{O}}(d^{\,3/2})$ queries to $U_A$.
\end{proof}

\subsection{Choosing the perturbation operator}\label{subsec:choiceofA}
In the description of the algorithm, the perturbation operator $A$ was left arbitrary, but some concrete choice of $A$ must be taken based on the problem of interest. Further, the convergence rate of our algorithm is dependent on $A$, meaning that the overall performance of preparing a low-energy state relies on adequately choosing this operator. There are many ways that we could use prior knowledge about a system to choose the perturbation term to facilitate cooling.  One approach would be to take inspiration from physical thermalization processes. In particular, one could choose $A$ to be a system-bath interaction term, but restricted to the Hilbert space of the system. For example, for light-matter interaction within the dipole approximation given by $-\Vec{\mu}\cdot \Vec{E}$, where $\Vec{\mu}$ is the electric dipole moment and $\Vec{E}$ is the electric field vector , one would set $A$ to be the dipole operator
\begin{equation}
    A = \Vec{\mu}.
\end{equation}
Then the cooling step of the algorithm would (roughly) correspond to photon emission via interaction with a light field that has available modes at any frequency, analogous to the process of spontaneous emission. One could similarly apply the same idea to vibrational relaxation or any other cooling mechanism. It is however difficult to argue how long the thermalization process would take for such physically motivated choices of the perturbation operator. Below, we consider a different strategy by choosing a completely random interaction and bound the resources needed to cool to the ground state; albeit at an expected cost that is exponential in the number of iterations.\\

For a more mathematical criteria of choosing $A$, one can make the first order approximation 
\begin{equation}
    e^{-i\Pi_{<\mathcal{E}}(A/2)\Pi_{<\mathcal{E}}} = \mathbb{I} -  i\Pi_{<\mathcal{E}}(A/2)\Pi_{<\mathcal{E}} + \mathcal{O}(\|A\|^2).
\end{equation}
From this we can write the transition matrix

\begin{align}
    T_{ij} &:=  |\bra{\lambda_i} e^{-i\Pi_{\le \mathcal{E}}(A)\Pi_{\le \mathcal{E}}} \ket{\lambda_j}|^2\nonumber\\
    &=\begin{cases}
        |\bra{\lambda_i}A\ket{\lambda_j}|^2/4 + \mathcal{O}(\|A\|^3), & \lambda_i \leq \lambda_j \wedge i\neq j\\
         1-\sum_{\lambda_k \leq \lambda_j} |\bra{\lambda_k}A\ket{\lambda_j}|^2/4 +\mathcal{O}(\|A\|^3), & i =j\\
        0, & \lambda_i > \lambda_j.\\
    \end{cases}
\end{align}
Hence within this approximation, given an initial state $\ket{\lambda_j}$, the probability that the cooling step lowers the energy is given by
\begin{align}
    \sum_{i: \lambda_i < \lambda_j} T_{ij} = \sum_{i: \lambda_i < \lambda_j} \frac{|\bra{\lambda_i}A\ket{\lambda_j}|^2}{4} +\mathcal{O}(\|A\|^3).
\end{align}

In the general case where one has no knowledge of the eigenstates of $H\in\mathbb{C}^{N\times N}$, a possibility is to let $A$ be a random matrix. More specifically, one could choose
\begin{equation}
    A = \frac{M}{\sqrt{N}},
\end{equation}
where $M$ is sampled from a Gaussian unitary ensemble (GUE). The Gaussian unitary ensemble is a unitarily invariant distribution on Gaussian random matrices such that each matrix element of $M$ satisfies
\begin{equation}
    M_{ij} \sim \left(\mathcal{N}(0, 1/2) +i\mathcal{N}(0, 1/2)\right)(1-\delta_{ij}) + \delta_{ij} \mathcal{N}(0, 1).
\end{equation}
Through a straightforward calculation using unitary invariance of the norm, one can verify that $\mathbb{E}\left[\|A\|\right] = 1$. From this, we can express the expected probability of a cooling step lowering the energy as
\begin{align}
    \mathbb{E}\left[\frac{1}{N}\sum_{i: \lambda_i < \lambda_j} \frac{|\bra{\lambda_i}M\ket{\lambda_j}|^2}{4}\right] &= \frac{1}{4N}\, \sum_{i: \lambda_i < \lambda_j} \mathbb{E}\left[|M_{ij}|^2\right]\\
    &= \frac{|\{i: \lambda_i < \lambda_j\}|}{4N}.
\end{align}
Hence while such a choice of $A$ would work well to initially cool down a random or highly excited state, it becomes less efficient as the state's energy tends closer to the ground state energy.  We see this from the fact that the expected number of trials before a transition is observed scales as
\begin{equation}
    N_{\rm trials} \in \mathcal{O}\left(\frac{N}{|\{i: \lambda_i < \lambda_j\}|} \right).
\end{equation}
Thus at every iteration of the algorithm we have an $\mathcal{O}(1/N)$ probability of transitioning to the ground state, hence
the expected number of iterations to cool into the ground state is $\mathcal{O}(N)$. This suggests that if a GUE random matrix is chosen for $A$ then the cost of preparing the ground state in the worst-case scenario will scale polynomially with the Hilbert space dimension. Despite this, we can still guarantee in such cases that the algorithm will converge to the ground state albeit at an exponential cost. Nonetheless, heuristic choices for $A$ like those inspired by physical phenomena are possible and could be used in many cases to perform the state preparation more efficiently.

\section{Conclusion}

In this work, we introduced a dynamical quantum algorithm for ground-state preparation aimed at overcoming limitations of existing methods. Recent work~\cite{fomichev2023initial} has argued that the most prominent methods for ground-state preparation are based on encoding classical wavefunctions on a quantum computer. In practice, when dealing with large and highly-correlated systems, classical methods can struggle to prepare initial states with significant overlap with the ground state, or even with a low-energy subspace. Hence there is motivation to develop fully-quantum state preparation algorithms, like the one introduced in this paper, that employ dynamic and adaptive approaches to go beyond encoding classical initial states.\\

Moreover, our approach addresses a number of limitations with existing fully quantum methods that use external heat baths to siphon out energy~\cite{shtanko2021algorithms}. Such techniques need additional qubits to represent the bath and often require directly tuning the gaps of the bath to coincide with the eigenvalue gaps of the Hamiltonian. We solve both problems by employing an approach based on quantum signal processing that is able to perturbatively cool the state without any knowledge of the spectral gaps and does not require additional qubits to represent the bath. We find that the number of queries to a quantum simulation oracle for our Hamiltonian that implements $e^{iH}$ scales as $\widetilde{\mathcal{O}}(d^{\,3/2}/\epsilon)$ and the number of queries to a block-encoding of the perturbation $A$ scales as $\widetilde{\mathcal{O}}(d^{\,3/2})$, where $d$ is the maximum number of iterations we wish to carry out and $\epsilon$ is the resolution used to distinguish states of different of energy. This is in contrast to the $\Omega(d^3/\epsilon)$ scaling of bath-based methods, as seen in~\cite{shtanko2021algorithms}.\\

Our work provides a general framework that enables the efficient utilization of any Hermitian operator $A$ as the driving mechanism to systematically steer any input state towards the ground state of any Hamiltonian $H$. Our contributions further include deriving an analytic expression that describes approximately the state's evolution through each dissipation step, which could be of independent interest in estimating leakage and effective simulation under perturbation. While we provide a theoretical selection criteria, discuss specific cases, and suggest potential choices for the interaction operator $A$, the optimal strategy for choosing $A$ to achieve fast convergence remains an open question. This challenge highlights a significant opportunity for future research.

\bibliographystyle{unsrtnat}
\bibliography{main.bib}
\appendix

\section{Proof of Evolution Leakage Bound}\label{appendix:first}
Here we will provide the proof for \cref{thm:hamLeak}
\leaking*
Before we can prove the above leakage theorem, we first need to establish a few lemmas regarding the norms and the leakages of each term in the Dyson series expansion of $e^{-i\Hat{H}t}$ for $\Hat{H} = H_0 + \frac{\sqrt{\delta}}{2}A$. Switching to the interaction picture, we represent $e^{-i\Hat{H}t}$ by $U(t, 0)$ defined by the following recursive expression:
\begin{equation}
    U(t, 0)= \,\mathbb{I} - i \int_{0}^t{dt_1\ A(t_1)U(t_1,0)}
\end{equation}
Where
\begin{equation}
    A(t) = e^{iH_0t}\left(\frac{\sqrt{\delta}}{2}A\right)e^{-iH_0t}
\end{equation}
Repeatedly substituting the expression for $U(t, 0)$ back into itself gives us the Dyson series
\begin{align}\label{eq:bigboy}
U(t, 0) =& \,\mathbb{I} + \frac{\sqrt{\delta}}{2i}\int_{0}^t dt_1e^{iH_0t_1}Ae^{-iH_0t_1}+ \left(\frac{\sqrt{\delta}}{2i}\right)^2\int_{0}^t dt_1 \int_{0}^{t_1} \, dt_2 e^{iH_0t_1}Ae^{-iH_0t_1}e^{iH_0t_2}Ae^{-iH_0t_2} \\
&\cdots+ \left(\frac{\sqrt{\delta}}{2i}\right)^k\int_{0}^t dt_1\int_{0}^{t_1} dt_2 \cdots \int_{0}^{t_{k-1}} dt_k\,e^{iH_0t_1}Ae^{-iH_0t_1} \cdots e^{iH_0t_k}Ae^{-iH_0t_k} +\cdots
\end{align}
We define $U_k(t, 0)$ for convenience as
\begin{equation}
    U_k(t, 0) = \left(\frac{\sqrt{\delta}}{2i}\right)^k\int_{0}^t dt_1\int_{0}^{t_1} dt_2 \cdots \int_{0}^{t_{k-1}} dt_k\,e^{iH_0t_1}Ae^{-iH_0t_1} \cdots e^{iH_0t_k}Ae^{-iH_0t_k}
\end{equation}
Thus
\begin{equation}
    U(t, 0) = \sum_{k=0}^\infty U_k(t, 0)
\end{equation}
We then have the following upper bound on the norm of each term in the series that follows trivially from the triangle inequality and the sub-multiplicative property of the spectral norm.
\begin{lemma}\label{lem:per_term_bound}
Let $A$ be a Hermitian operator with $\|A\|\leq 1$ and $\Pi$ be a projector. Let $\delta \in (0,1)$ and define $\hat{H} = {\rm REF}(\Pi)+ \frac{\sqrt{\delta}}{2}A$. Then $\forall k \in\mathbb{N}^+,\,\forall t\in\mathbb{R}^+$:
\[
\left\|U_k(t, 0)\right\|\leq \frac{t^{k}\cdot \delta^{k/2}}{k!\cdot2^k}
\]
where $U_k(t, 0)$ is the $k^{th}$ order term in the Dyson series expansion of the time evolution operator $e^{-i\hat{H}t} = \sum_{k=0}^\infty U_k(t, 0)$.
\end{lemma}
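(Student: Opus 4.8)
The plan is a direct norm estimate on the integral representation of $U_k(t,0)$, using nothing beyond the triangle inequality, submultiplicativity of the spectral norm, unitarity of $e^{\pm i H_0 t}$, and the volume of the standard $k$-simplex. First I would observe that $H_0 = {\rm REF}(\Pi) = \mathbb{I} - 2\Pi$ is Hermitian, so $e^{\pm i H_0 s}$ is unitary for every real $s$; hence each conjugated factor satisfies $\|e^{iH_0 t_j} A e^{-iH_0 t_j}\| = \|A\| \leq 1$. Consequently the integrand $e^{iH_0 t_1} A e^{-iH_0 t_1}\cdots e^{iH_0 t_k} A e^{-iH_0 t_k}$ has spectral norm at most $1$ pointwise in $(t_1,\dots,t_k)$.

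Next I would pull the scalar prefactor out: $\bigl|\tfrac{\sqrt{\delta}}{2i}\bigr|^k = \tfrac{\delta^{k/2}}{2^k}$. Then, bounding the norm of the integral by the integral of the norm (a standard consequence of the triangle inequality for Bochner-type integrals of continuous operator-valued functions), we get
\[
\|U_k(t,0)\| \;\leq\; \frac{\delta^{k/2}}{2^k}\int_{0}^{t} dt_1 \int_{0}^{t_1} dt_2 \cdots \int_{0}^{t_{k-1}} dt_k \;\bigl\|e^{iH_0 t_1} A e^{-iH_0 t_1}\cdots e^{iH_0 t_k} A e^{-iH_0 t_k}\bigr\| \;\leq\; \frac{\delta^{k/2}}{2^k}\int_{0}^{t} dt_1 \cdots \int_{0}^{t_{k-1}} dt_k \; 1.
\]
Finally I would evaluate the remaining nested integral as the volume of the ordered simplex $\{0 \le t_k \le \dots \le t_1 \le t\}$, which is $t^k/k!$, yielding $\|U_k(t,0)\| \leq \dfrac{t^k \, \delta^{k/2}}{k!\cdot 2^k}$, as claimed.

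There is no real obstacle here: the statement is essentially a bookkeeping exercise, and the only point worth stating carefully is that $H_0$ being Hermitian is what makes the interaction-picture conjugations norm-preserving (the same argument would go through for any Hermitian $H_0$, not just ${\rm REF}(\Pi)$). If one wanted to be fully rigorous about interchanging norm and integral, one can either invoke dominated convergence on Riemann sums or simply note that the integrand is continuous in $(t_1,\dots,t_k)$ on a compact domain, so the Riemann integral exists and $\|\int\| \le \int \|\cdot\|$ holds termwise in the limit of Riemann sums.
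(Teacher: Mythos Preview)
Your proof is correct and is exactly the argument the paper has in mind: the paper states that the bound ``follows trivially from the triangle inequality and the sub-multiplicative property of the spectral norm,'' and your write-up just makes those two steps (plus the simplex volume $t^k/k!$) explicit.
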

We then use the above lemma to derive a tighter bound in terms of $t$ on the norm of the transition amplitude induced by each term in the series from any element in the subspace defined by $\Pi$ to any other element in its orthogonal complement $\mathbb{I} - \Pi$.  
\begin{lemma}\label{lem:per_term_violation}
Let $A$ be a Hermitian operator with $\|A\|\leq 1$ and $\Pi$ be a projector. Let $\delta \in (0,1)$ and define $\hat{H} = {\rm REF}(\Pi) + \frac{\sqrt{\delta}}{2}A$. Then $\forall k \in\mathbb{N}^+,\,\forall t\in\mathbb{R}^+$:
\[
\left\|\left(\mathbb{I} - \Pi\right)U_k(t, 0)\,\Pi\right\|\leq \frac{t^{k-1} \delta^{k/2}}{(k-1)!\cdot2}
\]
where $U_k(t, 0)$ is the $k^{th}$ order term in the Dyson series expansion of the time evolution operator $e^{-i\hat{H}t} = \sum_{k=0}^\infty U_k(t, 0)$.
\end{lemma}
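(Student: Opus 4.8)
The plan is to sharpen the crude volume bound behind \cref{lem:per_term_bound} by exploiting that $H_0 = {\rm REF}(\Pi) = \mathbb{I}-2\Pi$ has exactly two eigenvalues: $+1$ on the support of $\mathbb{I}-\Pi$ and $-1$ on the support of $\Pi$. Consequently $e^{\pm iH_0 t}$ acts as a scalar phase on each of those two subspaces. First I would insert resolutions of the identity $\mathbb{I} = \Pi + (\mathbb{I}-\Pi)$ between consecutive factors $e^{iH_0 t_j}Ae^{-iH_0 t_j}$ in $U_k(t,0)$ and absorb the outer projectors $(\mathbb{I}-\Pi)$ on the left and $\Pi$ on the right. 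Writing $P_0 := \Pi$, $P_1 := \mathbb{I}-\Pi$, $\sigma_s := 2s-1$ for the $H_0$-eigenvalue on the support of $P_s$, and fixing $s_0 := 1$, $s_k := 0$, this rewrites $(\mathbb{I}-\Pi)U_k(t,0)\Pi$ as a sum over strings $(s_1,\dots,s_{k-1})\in\{0,1\}^{k-1}$ of terms
\begin{equation*}
\left(\frac{\sqrt{\delta}}{2i}\right)^{k}\left(\int_{t\ge t_1\ge\cdots\ge t_k\ge 0}\prod_{j=1}^{k} e^{i\omega_j t_j}\,dt_1\cdots dt_k\right)P_{s_0}AP_{s_1}A\cdots AP_{s_k},\qquad \omega_j := \sigma_{s_{j-1}}-\sigma_{s_j}\in\{0,\pm2\},
\end{equation*}
since $P_s e^{iH_0 t}=e^{i\sigma_s t}P_s$ pulls the phases out of the time-ordered integral.

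The crucial point is that because $s_0 = 1 \ne 0 = s_k$, the string $s_0,\dots,s_k$ must flip at least once; at the first index $m$ with $s_{m-1}\ne s_m$ one has $\omega_m = \pm2$, hence $|\omega_m| = 2$. I would perform the $t_m$-integral first: over $t_m\in[t_{m+1},t_{m-1}]$ (with the conventions $t_0 = t$, $t_{k+1}=0$) it equals $(e^{i\omega_m t_{m-1}}-e^{i\omega_m t_{m+1}})/(i\omega_m)$, of modulus at most $2/|\omega_m| = 1$. What remains is an integral over the $(k-1)$-dimensional simplex $\{t\ge t_1\ge\cdots\ge t_{m-1}\ge t_{m+1}\ge\cdots\ge t_k\ge 0\}$ of an integrand of modulus $\le 1$, hence bounded by the volume $t^{k-1}/(k-1)!$. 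Using $\|P_{s_0}AP_{s_1}\cdots AP_{s_k}\|\le\|A\|^{k}\le 1$, each of the $2^{k-1}$ strings contributes at most $(\sqrt{\delta}/2)^{k}\,t^{k-1}/(k-1)!$, and the triangle inequality over the strings yields $2^{k-1}(\sqrt{\delta}/2)^{k}\,t^{k-1}/(k-1)! = t^{k-1}\delta^{k/2}/(2\,(k-1)!)$, which is the claim.

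I expect the only real work to be bookkeeping: verifying that fixing $s_0, s_k$ and inserting the identities reproduces exactly the sandwiched Dyson term, that the spectral-projector identity $P_s e^{iH_0 t}=e^{i\sigma_s t}P_s$ lets the phases factor out, and that after eliminating $t_m$ the leftover integration region is precisely the stated lower-dimensional simplex (including the edge cases $m=1$ and $m=k$). Conceptually, the improvement over \cref{lem:per_term_bound} is that the forced crossing between the two flat eigenspaces of $H_0$ converts one time integral from a factor $\sim t/k$ into a bounded oscillatory integral — exactly the mechanism visible already at $k=1$, where $(\mathbb{I}-\Pi)U_1(t,0)\Pi = \tfrac{\sqrt{\delta}}{2i}\cdot\tfrac{e^{2it}-1}{2i}(\mathbb{I}-\Pi)A\Pi$ has norm $\le \sqrt{\delta}/2$ independent of $t$.
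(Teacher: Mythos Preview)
Your proof is correct but proceeds differently from the paper. The paper argues by induction on $k$: it verifies $k=1$ directly, then writes $U_{k+1}(t,0)=\frac{\sqrt{\delta}}{2i}\int_0^t dt_1\,e^{iH_0t_1}Ae^{-iH_0t_1}U_k(t_1,0)$, inserts a single $\mathbb{I}=\Pi+(\mathbb{I}-\Pi)$ between $A$ and $U_k$, applies the inductive hypothesis to the $(\mathbb{I}-\Pi)U_k\Pi$ piece, and handles the $\Pi U_k\Pi$ piece by an integration-by-parts estimate $\|\int_0^t e^{2it_1}U_k(t_1,0)\,dt_1\|\le t^k\delta^{k/2}/(k!\cdot 2)$ that relies on \cref{lem:per_term_bound}. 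You instead insert all $k-1$ resolutions of the identity at once, obtaining a sum over $2^{k-1}$ binary strings, and for each string exploit the guaranteed crossing to bound one oscillatory time-integral by $1$ before collapsing the remaining $(k-1)$ integrals to the simplex volume $t^{k-1}/(k-1)!$. This is essentially the same decomposition the paper uses later for \cref{thm:how_it_be}; you have simply applied it here as well. Your argument is self-contained (it does not need \cref{lem:per_term_bound}) and arguably cleaner, at the cost of the combinatorial bookkeeping; the paper's inductive proof avoids the sum over $2^{k-1}$ terms but pays with an auxiliary integration-by-parts lemma.
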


\begin{proof}
We will prove this via induction. For the base case of $k  = 1$ we have
\begin{equation}
    \left\|\left(\mathbb{I} - \Pi\right)U_1(t, 0)\,\Pi\right\|
\end{equation}
\begin{equation}
    = \left\|\frac{\sqrt{\delta}}{2i}\int_{0}^t dt_1\left(\mathbb{I} - \Pi\right)e^{iH_0t_1}Ae^{-iH_0t_1}\,\Pi\right\|
\end{equation}
by definition $H_0$ has energy $-1$ with respect to the $\Pi$ subspace and energy $1$ with respect to the $\mathbb{I}-\Pi$ subspace. Therefore, $\left(\mathbb{I} - \Pi\right)e^{iH_0t_1} = e^{it_1}\left(\mathbb{I} - \Pi\right)$ and $e^{-iH_0t_1}\,\Pi = \Pi\,e^{it_1}$. Thus
\begin{align}
     \left\|\left(\mathbb{I} - \Pi\right)U_1(t, 0)\,\Pi\right\| &= \left\|\left(\mathbb{I} - \Pi\right)A\,\Pi\right\|\left|\frac{\sqrt{\delta}}{2i}\int_{0}^t dt_1 e^{i2t_1}\right|\nonumber\\
     &\leq \left|\frac{\sqrt{\delta}}{2i}e^{it}\sin t\right|\nonumber\\
     &\leq \frac{\sqrt{\delta}}{2}
\end{align}
as desired.\\
Moving on to the induction step, we assume that our hypothesis holds for $k\in \mathbb{N}^+$, and we will show that implies that it also holds for $k+1$. Consider
\begin{equation}
    \left\|\left(\mathbb{I} - \Pi\right)U_{k+1}(t, 0)\,\Pi\right\|
\end{equation}
We have $U_{k+1}(t, 0)= \frac{\sqrt{\delta}}{2i}\int_0^t dt_1e^{iH_0t_1} A e^{-iH_0t_1}U_{k}(t_1, 0)$, hence
\begin{equation}
        \left\|\left(\mathbb{I} - \Pi\right)U_{k+1}(t, 0)\,\Pi\right\| =\frac{\sqrt{\delta}}{2}\left\|\int_0^t dt_1 \left(\mathbb{I} - \Pi\right) e^{iH_0t_1} A e^{-iH_0t_1}U_{k}(t_1, 0)\,\Pi\right\|
\end{equation}
Since $\mathbb{I} = \left(\mathbb{I} - \Pi\right) + \Pi$, we write
\begin{align}
    \left\|\left(\mathbb{I} - \Pi\right)U_{k+1}(t, 0)\,\Pi\right\| &= \frac{\sqrt{\delta}}{2}\left\|\int_0^t dt_1 \left(\mathbb{I} - \Pi\right) e^{iH_0t_1} A e^{-iH_0t_1}\left[\left(\mathbb{I} - \Pi\right) + \Pi\right]U_{k}(t_1, 0)\,\Pi\right\|\nonumber\\
    & = \frac{\sqrt{\delta}}{2}\Biggl\|\int_0^t dt_1 \left(\mathbb{I} - \Pi\right) e^{iH_0t_1} A e^{-iH_0t_1}\Pi U_{k}(t_1, 0)\,\Pi\nonumber\\ 
    &\quad + \int_0^t dt_1 \left(\mathbb{I} - \Pi\right) e^{iH_0t_1} A e^{-iH_0t_1}\left(\mathbb{I} - \Pi\right)U_{k}(t_1, 0)\,\Pi\Biggr\|\nonumber\\
    &\leq \frac{\sqrt{\delta}}{2}\left\|\int_0^t dt_1 \left(\mathbb{I} - \Pi\right) e^{iH_0t_1} A e^{-iH_0t_1} \Pi U_{k}(t_1, 0)\,\Pi\right\|\nonumber\\
    &\quad + \frac{\sqrt{\delta}}{2}\left\|\int_0^t dt_1 \left(\mathbb{I} - \Pi\right) e^{iH_0t_1} A e^{-iH_0t_1}\left(\mathbb{I} - \Pi\right)U_{k}(t_1, 0)\,\Pi\right\|\nonumber\\
    &\leq \frac{\sqrt{\delta}}{2}\left\|\int_0^t dt_1 e^{i2t_1}U_{k}(t_1, 0)\,\Pi\right\| + \frac{\sqrt{\delta}}{2}\int_0^t dt_1 \left\|\left(\mathbb{I} - \Pi\right)U_{k}(t_1, 0)\,\Pi\right\|
\end{align}
By our induction hypothesis $\left\|\left(\mathbb{I} - \Pi\right)U_k(t, 0)\,\Pi\right\|\leq \frac{t^{k-1} \delta^{k/2}}{(k-1)!\cdot2}$
\begin{align}
    \frac{\sqrt{\delta}}{2}\int_0^t dt_1 \left\|\left(\mathbb{I} - \Pi\right)U_{k}(t_1, 0)\,\Pi\right\|&\leq \frac{ \delta^{(k+1)/2}}{2\cdot 2}\int_0^t \frac{t_1^{k-1}}{(k-1)!} dt_1\nonumber\\
    &= \frac{1}{2}\cdot \frac{t^k \cdot \delta^{(k+1)/2}}{k!\cdot 2}
\end{align}
Hence it suffices to show
\begin{equation}
    \frac{\sqrt{\delta}}{2}\left\|\int_0^t dt_1 e^{i2t_1}U_{k}(t_1, 0)\,\Pi\right\| \leq \frac{\sqrt{\delta}}{2}\left\|\int_0^t dt_1 e^{i2t_1}U_{k}(t_1, 0)\right\|\leq \frac{1}{2}\cdot \frac{t^k \cdot \delta^{(k+1)/2}}{k!\cdot 2}
\end{equation}
In other words, we need to show
\begin{equation}
    \left\|\int_0^t dt_1 e^{i2t_1}U_{k}(t_1, 0)\right\|\leq  \frac{t^k \cdot \delta^{k/2}}{k!\cdot 2}
\end{equation}
Using integration by parts we get
\begin{align}
    \left\|\int_0^t dt_1 e^{i2t_1}U_{k}(t_1, 0)\right\|  &= \left\|\frac{1}{2i}\left[e^{2it} U_k(t, 0) - \int_0^t e^{2it_1} U'_{k}(t_1, 0) dt_1\right]\right\|\nonumber\\
    &\leq \frac{\left\|U_k(t, 0)\right\| + \left\|\int_0^t e^{2it_1} U'_{k}(t_1, 0) dt_1\right\|}{2}\nonumber\\
    &\leq \frac{\left\|U_k(t, 0)\right\| + \int_0^t \left\|U'_{k}(t_1, 0)\right\| dt_1}{2}
\end{align}
We know $U_{k}(t_1, 0) = F(t_1) - F(0)$ where $F(x) = \frac{\sqrt{\delta}}{2i}\int e^{iH_0x} A e^{-iH_0x}U_{k-1}(x, 0)\,dx$. Then $\left\|U'_{k}(t_1, 0)\right\| = \|F'(t_1)\| \leq \frac{\sqrt{\delta}}{2}\left\|U_{k-1}(t_1, 0)\right\|$. Then
\begin{equation}
    \frac{\left\|U_k(t, 0)\right\| + \int_0^t \left\|U'_{k}(t_1, 0)\right\| dt_1}{2}\leq \frac{\left\|U_k(t, 0)\right\| + \frac{\sqrt{\delta}}{2}\int_0^t \left\|U_{k-1}(t_1, 0)\right\| dt_1}{2}
\end{equation}
Then by \cref{lem:per_term_bound}
\begin{align}
   \left\|\int_0^t dt_1 e^{i2t_1}U_{k}(t_1, 0)\right\| &\leq \frac{\frac{t^{k}\cdot \delta^{k/2}}{k!\cdot2^k} + \int_0^t dt_1\frac{t_1^{k-1}\cdot \delta^{k/2}}{(k-1)!\cdot2^k}}{2}\nonumber\\
    &\leq \frac{t^{k}\cdot \delta^{k/2}}{k!\cdot2^k}\nonumber\\
    &\leq \frac{t^{k}\cdot \delta^{k/2}}{k!\cdot2}
\end{align}
\end{proof}
We then note that when $t$ is an integer multiple of $\pi$, no transitions are allowed between the 2 orthogonal subspaces by $U_1(t, 0)$.  However, transitions are possible at higher orders in the series.
\begin{lemma}\label{lem:first_term_violation}
Let $A$ be a Hermitian operator with $\|A\|\leq 1$ and $\Pi$ be a projector. Let $\delta \in (0,1)$ and define $\hat{H} = {\rm REF}(\Pi) + \frac{\sqrt{\delta}}{2}A$. Then $\forall k \in\mathbb{N}^+$ if $t = k\pi$ :
\[
\left\|\left(\mathbb{I} - \Pi\right)U_1(t, 0)\,\Pi\right\| = 0
\]
Where $U_1(t, 0)$ is the 1st order term in the Dyson series expansion of the time evolution operator $e^{-i\hat{H}t} = \sum_{n=0}^\infty U_n(t, 0)$.
\end{lemma}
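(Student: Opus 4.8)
The plan is to reduce the claim to the elementary observation that $\int_0^{k\pi} e^{2it_1}\,dt_1 = 0$, following exactly the manipulation already carried out in the base case ($k=1$) of the proof of \cref{lem:per_term_violation}. First I would write out the first-order Dyson term explicitly,
\[
U_1(t,0) = \frac{\sqrt{\delta}}{2i}\int_0^t dt_1\, e^{iH_0 t_1} A\, e^{-iH_0 t_1},
\]
with $H_0 = {\rm REF}(\Pi) = \mathbb{I} - 2\Pi$, and then sandwich it between $(\mathbb{I}-\Pi)$ on the left and $\Pi$ on the right.

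The key step is the spectral action of $H_0$: since $H_0$ acts as $+1$ on the range of $\mathbb{I}-\Pi$ and as $-1$ on the range of $\Pi$, we have $(\mathbb{I}-\Pi)e^{iH_0 t_1} = e^{it_1}(\mathbb{I}-\Pi)$ and $e^{-iH_0 t_1}\Pi = e^{it_1}\Pi$. These identities let me pull all the exponentials out of the integral as scalars, leaving
\[
(\mathbb{I}-\Pi)U_1(t,0)\Pi = \frac{\sqrt{\delta}}{2i}\left(\int_0^t e^{2it_1}\,dt_1\right)(\mathbb{I}-\Pi)A\,\Pi.
\]
Taking norms gives $\|(\mathbb{I}-\Pi)U_1(t,0)\Pi\| = \frac{\sqrt{\delta}}{2}\,\bigl|\int_0^t e^{2it_1}\,dt_1\bigr|\,\|(\mathbb{I}-\Pi)A\Pi\|$.

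Finally I would evaluate $\int_0^t e^{2it_1}\,dt_1 = \frac{e^{2it}-1}{2i}$ and note that for $t = k\pi$ with $k\in\mathbb{N}^+$ we have $e^{2it} = e^{2\pi i k} = 1$, so the integral vanishes and hence the whole expression is $0$, regardless of the (bounded) factor $\|(\mathbb{I}-\Pi)A\Pi\|$. There is no real obstacle here; the only point requiring a line of care is justifying the intertwining relations for $e^{\pm iH_0 t_1}$, which follow immediately because $H_0$ is a function of $\Pi$ and therefore commutes with both $\Pi$ and $\mathbb{I}-\Pi$, acting as a scalar on each of the two complementary eigenspaces. This is essentially the same computation used in the $k=1$ base case of \cref{lem:per_term_violation}, specialized to $t$ an integer multiple of $\pi$ so that the oscillatory integral closes up exactly.
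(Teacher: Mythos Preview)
Your proposal is correct and follows essentially the same route as the paper's proof: both use the intertwining identities $(\mathbb{I}-\Pi)e^{iH_0 t_1}=e^{it_1}(\mathbb{I}-\Pi)$ and $e^{-iH_0 t_1}\Pi=e^{it_1}\Pi$ to factor out the scalar integral $\int_0^t e^{2it_1}\,dt_1$, which vanishes at $t=k\pi$. The only cosmetic difference is that the paper rewrites the integral as $e^{it}\sin t$ before invoking $\sin(k\pi)=0$, whereas you use $e^{2ik\pi}=1$ directly.
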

\begin{proof}
Let $k\in \mathbb{N}$ and let $t = k\pi$, then we have
\begin{align}
    \left\|\left(\mathbb{I} - \Pi\right)U_1(t, 0)\,\Pi\right\|= \left\|\frac{\sqrt{\delta}}{2i}\int_{0}^t dt_1\left(\mathbb{I} - \Pi\right)e^{iH_0t_1}Ae^{-iH_0t_1}\,\Pi\right\|
\end{align}
by definition $H_0$ has energy $-1$ with respect to the $\Pi$ subspace and energy $1$ with respect to the $\mathbb{I}-\Pi$ subspace. Therefore, $\left(\mathbb{I} - \Pi\right)e^{iH_0t_1} = e^{it_1}\left(\mathbb{I} - \Pi\right)$ and $e^{-iH_0t_1}\,\Pi = \Pi\,e^{it_1}$. Thus
\begin{align}
     \left\|\left(\mathbb{I} - \Pi\right)U_1(t, 0)\,\Pi\right\| &= \left\|\left(\mathbb{I} - \Pi\right)A\,\Pi\right\|\left|\frac{\sqrt{\delta}}{2i}\int_{0}^t dt_1 e^{i2t_1}\right|\nonumber\\
     &\leq \left|\frac{\sqrt{\delta}}{2i}e^{it}\sin t\right|
\end{align}
Then since $\sin(t) = \sin(k\pi) = 0$, we get
\begin{equation}
    \left\|\left(\mathbb{I} - \Pi\right)U_1(t, 0)\,\Pi\right\| = 0
\end{equation}
\end{proof}
We are now finally ready to prove \cref{thm:hamLeak}.
\begin{proof}[Proof of \cref{thm:hamLeak}]
Let $\delta \in (0,1)$ and let $t = \pi\cdot\left\lceil \frac{1}{\pi \sqrt{\delta}}\right\rceil$, then we have
\begin{align}
    \left\|\left(\mathbb{I} - \Pi\right)e^{i\Hat{H}t}\,\Pi\right\| &= \left\|\left(\mathbb{I} - \Pi\right)\left(\sum_{n=0}^\infty U_n(t, 0)\right)\,\Pi\right\|\nonumber\\
    &= \left\|\sum_{n=0}^\infty\left(\mathbb{I} - \Pi\right) U_n(t, 0)\,\Pi\right\|\nonumber\\
    &\leq \sum_{n=0}^\infty\left\|\left(\mathbb{I} - \Pi\right) U_n(t, 0)\,\Pi\right\|
\end{align}
Since $\left(\mathbb{I} - \Pi\right)\Pi = 0$ and by \cref{lem:first_term_violation} $\left(\mathbb{I} - \Pi\right) U_1(t, 0)\,\Pi = 0$, we get
\begin{equation}
    \left\|\left(\mathbb{I} - \Pi\right)e^{i\Hat{H}t}\,\Pi\right\| \leq \sum_{n=2}^\infty\left\|\left(\mathbb{I} - \Pi\right) U_n(t, 0)\,\Pi\right\|
\end{equation}
by \cref{lem:per_term_violation}
\begin{align}
    \left\|\left(\mathbb{I} - \Pi\right)e^{i\Hat{H}t}\,\Pi\right\|&\leq \sum_{n=2}^\infty \frac{t^{n-1} \delta^{n/2}}{(n-1)!\cdot2}\nonumber\\
    &\leq \frac{\sqrt{\delta}}{2}\sum_{n=2}^\infty \frac{1}{(n-1)!}\nonumber\\
    &= \frac{\sqrt{\delta}}{2}\sum_{n=1}^\infty \frac{1}{n!}\nonumber\\
    &= \frac{\sqrt{\delta}}{2} \cdot (e-1)\nonumber\\
    &\leq \sqrt{\delta}
\end{align}
therefore
\begin{equation}
    \left\|\left(\mathbb{I} - \Pi\right)e^{i\Hat{H}t}\,\Pi\right\|^2\leq \delta
\end{equation}
\end{proof}

\section{Proof of Effective Evolution}\label{appendix:second}
Here we will provide the proof for \cref{thm:how_it_be}
\dissipate*

\begin{proof}[Proof of \cref{thm:how_it_be}]
    Writing $e^{-i\Hat{H}t}$ as a Dyson series and recalling $t=1/\sqrt{\delta}$
    \begin{align}
        \Pi e^{-i\Hat{H}t}\,\Pi &= \sum_{k=0}^\infty \Pi U_k(t, 0)\Pi\nonumber\\
        &= \Pi + \frac{\Pi A\Pi}{2i} + \sum_{k=2}^\infty \Pi U_k(t, 0)\Pi\nonumber\\
        &= \Pi + \frac{\Pi A\Pi}{2i} + \sum_{k=2}^\infty\!\left(\frac{\sqrt{\delta}}{2i}\right)^k\int_{0}^t \Pi e^{iH_0t_1}Ae^{-iH_0t_1}dt_1\int_{0}^{t_1}e^{iH_0t_2}Ae^{-iH_0t_2} dt_2 \cdots \int_{0}^{t_{k-1}} e^{iH_0t_k}Ae^{-iH_0t_k}\Pi dt_k
    \end{align}
    For each term in the sum, $U_k(t, 0)$, we can insert $k-1$ resolutions of identity $\mathbb{I} = \left(\mathbb{I} -\Pi\right) + \Pi$ after the first $k-1$ applications of $A$.
    \begin{align}
        \Pi e^{-i\Hat{H}t}\,\Pi &= \Pi + \frac{\Pi A\Pi}{2i} + \sum_{k=2}^\infty\, \Biggl[\,\sum_{\Vec{J}\in \{0,1\}^{k-1}}\, \left(\frac{\sqrt{\delta}}{2i}\right)^k\Pi\, A \left(\prod_{n=1}^{k-1} \left(\mathbb{I} -\Pi\right)^{(1-J_n)} \Pi^{J_n} \,A\right)\Pi\nonumber\\
        &\quad \times \int_{0}^t e^{2i\left(J_1-1\right)t_1}dt_1\int_{0}^{t_1} e^{2i\left(J_2-J_1\right)t_2} dt_2 \cdots \int_{0}^{t_{k-2}} e^{2i\left(J_{k-1}-J_{k-2}\right)t_{k-1}} dt_{k-1} \int_{0}^{t_{k-1}} e^{2i\left(1-J_{k-1}\right)t_k} dt_k \Biggr]
    \end{align}
    Separating the case of $\Vec{J}=\Vec{1}$ from $\Vec{J}\neq\Vec{1}$ (where $\Vec{1}$ is the all ones vector), we get
    \begin{align}\label{eq:fat_boy}
    \begin{split}
        \Pi e^{-i\Hat{H}t}\,\Pi &= \Pi + \frac{\Pi A\Pi}{2i} + \sum_{k=2}^\infty \Biggl[\left(\frac{\Pi\, A\,\Pi\sqrt{\delta}}{2i}\right)^k \,\int_{0}^t dt_1\int_{0}^{t_1} dt_2 \cdots \int_{0}^{t_{k-2}} dt_{k-1} \int_{0}^{t_{k-1}} 1 dt_k \Biggr]\\
        &\quad +\sum_{k=2}^\infty\, \Biggl[\,\sum_{\substack{\Vec{J}\in \{0,1\}^{k-1}\\ \Vec{J}\neq\Vec{1}}}\, \left(\frac{\sqrt{\delta}}{2i}\right)^k\Pi\, A \left(\prod_{n=1}^{k-1} \left(\mathbb{I} -\Pi\right)^{(1-J_n)} \Pi^{J_n} \,A\right)\Pi\\
        &\quad \times \int_{0}^t e^{2i\left(J_1-1\right)t_1}dt_1\int_{0}^{t_1} e^{2i\left(J_2-J_1\right)t_2} dt_2 \cdots \int_{0}^{t_{k-2}} e^{2i\left(J_{k-1}-J_{k-2}\right)t_{k-1}} dt_{k-1} \int_{0}^{t_{k-1}} e^{2i\left(1-J_{k-1}\right)t_k} dt_k \Biggr]\\
        &= \Pi + \frac{\Pi A\Pi}{2i} + \sum_{k=2}^\infty \Biggl[\left(\frac{\Pi\, A\,\Pi}{2i}\right)^k\frac{1}{k!} \Biggr] +\sum_{k=2}^\infty\, \Biggl[\,\sum_{\substack{\Vec{J}\in \{0,1\}^{k-1}\\ \Vec{J}\neq\Vec{1}}}\, \left(\frac{\sqrt{\delta}}{2i}\right)^k\Pi\, A \left(\prod_{n=1}^{k-1} \left(\mathbb{I} -\Pi\right)^{(1-J_n)} \Pi^{J_n} \,A\right)\Pi\\
        &\quad \times \int_{0}^t e^{2i\left(J_1-1\right)t_1}dt_1\int_{0}^{t_1} e^{2i\left(J_2-J_1\right)t_2} dt_2 \cdots \int_{0}^{t_{k-2}} e^{2i\left(J_{k-1}-J_{k-2}\right)t_{k-1}} dt_{k-1} \int_{0}^{t_{k-1}} e^{2i\left(1-J_{k-1}\right)t_k} dt_k \Biggr]\\
        &= e^{-i\Pi(A/2)\Pi}\,\Pi +\sum_{k=2}^\infty\, \Biggl[\,\sum_{\substack{\Vec{J}\in \{0,1\}^{k-1}\\ \Vec{J}\neq\Vec{1}}}\, \left(\frac{\sqrt{\delta}}{2i}\right)^k\Pi\, A \left(\prod_{n=1}^{k-1} \left(\mathbb{I} -\Pi\right)^{(1-J_n)} \Pi^{J_n} \,A\right)\Pi\\
        &\quad \times \int_{0}^t e^{2i\left(J_1-1\right)t_1}dt_1\int_{0}^{t_1} e^{2i\left(J_2-J_1\right)t_2} dt_2 \cdots \int_{0}^{t_{k-2}} e^{2i\left(J_{k-1}-J_{k-2}\right)t_{k-1}} dt_{k-1} \int_{0}^{t_{k-1}} e^{2i\left(1-J_{k-1}\right)t_k} dt_k \Biggr].
        \end{split}
    \end{align}
    We introduce $\Phi_k(\Vec{J})$ in place of the integral expression for brevity
    \begin{align}
        \Phi_k(\Vec{J}) := \int_{0}^t e^{2i\left(J_1-1\right)t_1}dt_1\int_{0}^{t_1} e^{2i\left(J_2-J_1\right)t_2} dt_2 \cdots \int_{0}^{t_{k-2}} e^{2i\left(J_{k-1}-J_{k-2}\right)t_{k-1}} dt_{k-1} \int_{0}^{t_{k-1}} e^{2i\left(1-J_{k-1}\right)t_k} dt_k,
    \end{align}
    which implies
    \begin{align}
        \Biggl\|\Pi e^{-i\Hat{H}t}\,\Pi - e^{-i\Pi(A/2)\Pi}\,\Pi\Biggr\| &= \Biggl\|\sum_{k=2}^\infty\, \sum_{\substack{\Vec{J}\in \{0,1\}^{k-1}\\ \Vec{J}\neq\Vec{1}}}\, \left(\frac{\sqrt{\delta}}{2i}\right)^k\Pi\, A \left(\prod_{n=1}^{k-1} \left(\mathbb{I} -\Pi\right)^{(1-J_n)} \Pi^{J_n} \,A\right)\Pi \cdot\Phi_k(\Vec{J})\Biggr\|\nonumber\\
        &\leq \sum_{k=2}^\infty\,\sum_{\substack{\Vec{J}\in \{0,1\}^{k-1}\\ \Vec{J}\neq\Vec{1}}}\,\Biggl\| \left(\frac{\sqrt{\delta}}{2i}\right)^k\Pi\, A \left(\prod_{n=1}^{k-1} \left(\mathbb{I} -\Pi\right)^{(1-J_n)} \Pi^{J_n} \,A\right)\Pi \cdot\Phi_k(\Vec{J}) \Biggr\|\nonumber\\
        &\leq \sum_{k=2}^\infty\, \sum_{\substack{\Vec{J}\in \{0,1\}^{k-1}\\ \Vec{J}\neq\Vec{1}}}\, \left(\frac{\sqrt{\delta}}{2}\right)^k \cdot\left| \Phi_k(\Vec{J})\right|\nonumber\\
        &\leq \sum_{k=2}^\infty\,\left(\frac{\sqrt{\delta}}{2}\right)^k \cdot 2^{k-1}\max_{\substack{\Vec{J}\in \{0,1\}^{k-1}\\ \Vec{J}\neq\Vec{1}}} \left\{\left| \Phi_k(\Vec{J})\right|\right\}\nonumber\\
        &\leq \frac{1}{2}\sum_{k=2}^\infty\,\delta^{k/2}\cdot\max_{\substack{\Vec{J}\in \{0,1\}^{k-1}\\ \Vec{J}\neq\Vec{1}}} \left\{\left| \Phi_k(\Vec{J})\right|\right\}\label{eq:theugly}.
    \end{align}
    We now bound $\left| \Phi_k(\Vec{J})\right|$ for the case of $\Vec{J}\neq\Vec{1}$. Let $\Vec{J}\in \{0,1\}^{k-1}$ such that $\Vec{J}\neq\Vec{1}$. Then let $c$ be the highest index in $\Vec{J}$ such that $J_c=0$ (that is the first time we cross over to the $(\mathbb{I}-\Pi)$ subspace), we have
    \begin{align}
        \left| \Phi_k(\Vec{J})\right| &= \Biggl|\int_{0}^t e^{2i\left(J_1-1\right)t_1}dt_1\int_{0}^{t_1} e^{2i\left(J_2-J_1\right)t_2} dt_2 \cdots \int_{0}^{t_{c-1}} e^{2i\left(J_{c}-J_{c-1}\right)t_c} dt_c \int_{0}^{t_{c}} e^{2it_{c+1}} \frac{t_{c+1}^{k-c-1}}{(k-c-1)!} \,dt_{c+1}\Biggr|\nonumber\\
        &\leq \int_{0}^t dt_1\int_{0}^{t_1} dt_2 \cdots \int_{0}^{t_{c-1}} dt_c \Biggl|\int_{0}^{t_{c}} e^{2it_{c+1}} \frac{t_{c+1}^{k-c-1}}{(k-c-1)!} \,dt_{c+1}\Biggr|
    \end{align}
    It is easy to check via integration by parts that
    \begin{equation}
        \Biggl|\int_{0}^{t_{c}} e^{2it_{c+1}} \frac{t_{c+1}^{k-c-1}}{(k-c-1)!} \,dt_{c+1}\Biggr|\leq \frac{t_{c}^{k-c-1}}{(k-c-1)!}
    \end{equation}
    Hence the full integral can be bounded by
    \begin{align}
        \left| \Phi_k(\Vec{J})\right| &\leq \int_{0}^t dt_1\int_{0}^{t_1} dt_2 \cdots \int_{0}^{t_{c-1}} dt_c \Biggl|\int_{0}^{t_{c}} e^{2it_{c+1}} \frac{t_{c+1}^{k-c-1}}{(k-c-1)!} \,dt_{c+1}\Biggr|\nonumber\\ 
        &\leq \int_{0}^t dt_1\int_{0}^{t_1} dt_2 \cdots \int_{0}^{t_{c-1}} \frac{t_{c}^{k-c-1}}{(k-c-1)!}\, dt_c\nonumber\\
        &= \frac{t^{k-1}}{(k-1)!}
    \end{align}
    Substituting this upper bound back into \cref{eq:theugly} we get
    \begin{align}
        \Biggl\|\Pi e^{-i\Hat{H}t}\,\Pi - e^{-i\Pi(A/2)\Pi}\,\Pi\Biggr\|&\leq \frac{1}{2}\sum_{k=2}^\infty\,\delta^{k/2}\cdot \frac{t^{k-1}}{(k-1)!}\nonumber\\
    &= \frac{\sqrt{\delta}}{2}\sum_{k=2}^\infty \frac{1}{(k-1)!}\nonumber\\
    &= \frac{\sqrt{\delta}}{2}\sum_{k=1}^\infty \frac{1}{k!}\nonumber\\
    &= \frac{\sqrt{\delta}}{2} \cdot (e-1)\nonumber\\
    &\leq \sqrt{\delta}
    \end{align}
    Hence
    \begin{equation}
        \Biggl\|\Pi e^{-i\Hat{H}t}\,\Pi - e^{-i\Pi(A/2)\Pi}\,\Pi\Biggr\|^2\leq \delta
    \end{equation}
    
\end{proof}

\end{document}